\newtheorem{definition}{Definition}
\newtheorem{corollary}{Corollary}
\newtheorem{example}{Example}
\newtheorem{theorem}{Theorem}
\newtheorem{lemma}{Lemma}
\newcommand{\lspan}{\texttt{span}}
\algrenewcommand\algorithmicrequire{\textbf{Input:}}
\algrenewcommand\algorithmicensure{\textbf{Output:}}
\begin{document}

\preprint{APS/123-QED}

\title{Finding dense sub-lattices as low-energy states of a Hamiltonian}
\author{Júlia Barberà-Rodríguez}
\affiliation{ICFO-Institut de Ciències Fotòniques, The Barcelona Institute of Science and Technology,\\
Av. Carl Friedrich Gauss 3, 08860 Castelldefels (Barcelona), Spain}
\affiliation{SandboxAQ, Palo Alto, California, USA}
\author{Nicolas Gama}
\affiliation{SandboxAQ, Palo Alto, California, USA}
\author{Anand Kumar Narayanan}
\affiliation{SandboxAQ, Palo Alto, California, USA}
\author{David Joseph}
\affiliation{SandboxAQ, Palo Alto, California, USA}

\date{\today}

\begin{abstract}
Lattice-based cryptography has emerged as one of the most prominent candidates for post-quantum cryptography, projected to be secure against the imminent threat of large-scale fault-tolerant quantum computers. 
The Shortest Vector Problem (SVP) is to find the shortest non-zero vector in a given lattice. It is fundamental to lattice-based cryptography and believed to be hard even for quantum computers. 
We study a natural generalization of the SVP known as the $K$-Densest Sub-lattice Problem ($K$-DSP): to find the densest $K$-dimensional sub-lattice of a given lattice.    
We formulate $K$-DSP as finding the first excited state of a Z-basis Hamiltonian, making $K$-DSP amenable to investigation via an array of quantum algorithms, including Grover search, quantum Gibbs sampling, adiabatic, and variational quantum algorithms. The complexity of the algorithms depends on the basis through which the input lattice is presented. 
We present a classical polynomial-time algorithm that takes an arbitrary input basis and preprocesses it into inputs suited to quantum algorithms. 
With preprocessing, we prove that $O(KN^2)$ qubits suffice for solving $K$-DSP for $N$ dimensional input lattices. 
We empirically demonstrate the performance of a Quantum Approximate Optimization Algorithm $K$-DSP solver for low dimensions, highlighting the influence of a good preprocessed input basis.
We then discuss the hardness of $K$-DSP in relation to the SVP, to see if there is a reason to build post-quantum cryptography on $K$-DSP. 
We devise a quantum algorithm that solves $K$-DSP with run-time exponent $(5KN\log{N})/2$. Therefore, for fixed $K$, 
$K$-DSP is no more than polynomially harder than the SVP. 
The central insight we use is similar in spirit to those underlying the best-known classical $K$-DSP algorithm due to Dadush and Micciancio. Whether the exponential dependence on $K$ can be lowered remains an open question.  
\end{abstract}

\maketitle

\section{\label{sec:Intro}Introduction}
In 1994, Shor revolutionized quantum computing and cryptography by devising quantum algorithms that solve prime factorization and discrete logarithm in polynomial time~\cite{365700} using a fault-tolerant quantum computer. These problems are believed to be intractable on classical computers. Consequently, most public-key cryptographic protocols were built on such hardness assumptions~\cite{10.1145/359340.359342, Hankerson2004}. 

The emergence of fault-tolerant quantum computers that can run Shor's algorithm poses an imminent threat to public-key cryptography based on factoring or discrete logarithms. As a preventive measure, post-quantum cryptography (PQC) is being developed with urgency, spurred on by initiatives by the National Institute of Standards and Technology (NIST). Post-quantum cryptography is built on problems believed to be hard even on fault-tolerant quantum computers.
Within PQC, lattice-based cryptography (LBC)~\cite{Micciancio2009} is considered to be one of the most promising solutions to build quantum-safe primitives~\cite{alagic2020status}. This is also supported by complexity theoretic hardness reductions~\cite{10.1145/1568318.1568324}, as well as the failure of the best-known quantum algorithmic methods in breaking lattice-based problems. 

Public-key encryption schemes based on lattices were pioneered by Ajtai~\cite{10.1145/237814.237838}, and put on sound complexity theoretic footing by Regev~\cite{10.1145/1568318.1568324}. The Shortest Vector Problem (SVP) is to find the shortest non-zero vector of a given lattice. Regev based his schemes on the Learning With Errors (LWE) problem, the dual problem to the SVP.  The security of these systems is supported by a worst-case to average-case reduction from SVP to LWE-like problems. Thus, the hardness of SVP-like problems is the foundation of LBC. 
Since then, there has been a long line of LBC research leading to soon-to-be-standardized schemes such as Kyber, Dilithium, and Falcon~\cite{8406610, Ducas_Kiltz_Lepoint_Lyubashevsky_Schwabe_Seiler_Stehle_2018,fouque2018falcon}. 

The $K$-Densest Sub-lattice Problem ($K$-DSP) seeks the densest sub-lattice of a prescribed dimension $K$ within a given lattice. It is a generalization of the SVP since SVP is merely $1$-DSP. Being a generalization, $K$-DSP is at least as hard as the SVP but has received much less attention.  
While the primary focus of finding dense sub-lattices lies in cryptography~\cite{rankin_1955}, it also holds relevance in communication and information theory~\cite{4626061}, as well as in crystallography~\cite{gusev2023optimality}. 

The best-known classical algorithms for solving SVP start with classical reduction algorithms such as the LLL algorithm~\cite{lenstra1982factoring} and perform enumeration or sieving. Enumeration algorithms started with Pohst~\cite{10.1145/1089242.1089247} and Kannan ~\cite{10.1145/800061.808749},  and have evolved to be the best-known algorithms for SVP (see~\cite{eurocrypt-2010-23995} and the references therein).

Despite considerable interest, it is not clear if these sophisticated classical algorithms can be sped up meaningfully on a fault-tolerant quantum computer. However, the current interest is in designing quantum algorithms in the Noise Intermediate Scale Quantum (NISQ) era~\cite{Preskill_2018}, with limited qubits that are extremely sensitive to noise and very susceptible to decoherence effects. In this context, there is a necessity to develop hardware and quantum algorithms that are able to work properly even in the presence of noise~\cite{Bharti_2022}. 
Within this framework, Variational Quantum Algorithms (VQAs) emerge as one of the most encouraging approaches~\cite{Cerezo_2021}. These algorithms rely on running a parametrized quantum circuit on a quantum computer and perform classical optimization afterward to update the parameters of this circuit. 

The best-known classical algorithms for $K$-DSP are due to Dadush-Micciancio and for $K = \Theta(N)$ has a significantly worse complexity $K^{O({K N})}$ than SVP solvers~\cite{doi:10.1137/1.9781611973105.79}.
There have been previous works on mapping the SVP into one of finding low-energy states of a Hamiltonian.
Shorts vectors in the lattice map to eigenstates of the Hamiltonian with eigenvalues corresponding to the vector lengths. In~\cite{Joseph_2020}, the Bose-Hubbard Hamiltonian is used to solve the SVP via adiabatic quantum computation. Since their lowest energy state is the zero vector, the first excited state is the target of the algorithm. The results show that, outside the adiabatic regime and for up to 4-dimensional lattices, the solution for the SVP can be found with the same probability as the ground state or the second excited state. Two quantum algorithms are also proposed in~\cite{Joseph_2021}, where the authors consider the mapping of the SVP now into Ising coefficients which result in Hamiltonians that are more reusable in the context of gate model computers and certain annealer architectures.  
These algorithms are limited to low-dimensional lattices. A different approach with results for up to 28-dimensional lattices is presented in~\cite{Albrecht_2023}, employing the Ising spin Hamiltonian mapping followed by a Variational Quantum Eigensolver (VQE).  See~\cite{PhysRevA.106.022435, Joseph_2022, dableheath2023quantum, schirman2023finding} for additional quantum SVP-solving approaches.

\subsubsection{Contribution}
We construct a Hamiltonian from the input lattice whose eigenstates correspond to sub-lattices (of dimension at most $K$) with eigenvalues proportional to the covolumes of the sub-lattices. The ground energy eigenstates correspond to sub-lattices of dimension less than $K$. The solution to the $K$-DSP is the first excited states, corresponding to $K$-dimensional sub-lattices of the lowest covolume. Through this Hamiltonian correspondence, a variety of quantum algorithms ranging from adiabatic quantum computing to QAOA may be invoked to solve $K$-DSP. 
Such generality means that this Hamiltonian formulation can be useful in NISQ scenarios where VQA and QAOA-type approaches yield the best results, as well as in adiabatic systems, which may represent the most challenging engineering milestones in developing fault-tolerant quantum computation.    
For quantum algorithms that demand that the solutions to $K$-DSP be encoded as the lowest energy eigenstate, we present a way to penalize the ground states (fewer than $K$-dimensional sub-lattices) of the original Hamiltonian. To tune the penalization accurately, we resort to estimating the spectral gap of the Hamiltonian. 

The difficulty of lattice problems such as the SVP or $K$-DSP are not intrinsic to the input lattice but may depend on the basis through which the input lattice is presented. We present a classical polynomial-time algorithm that takes the input basis and preprocesses it to be suitable for being solved using quantum algorithms. In the simplest case, this preprocessing merely LLL-reduces the input basis and feeds it to the quantum algorithm. The LLL-reduced basis is a basis for the same input lattice but is better shaped to aid in the low-energy eigenstate search. In general, the preprocessing is far more intricate and may only call the quantum algorithm on a carefully chosen lower dimensional instance of $K$-DSP. With the preprocessing, we prove that $O(KN^2)$ qubits suffice for constructing the $K$-DSP Hamiltonian for $N$ dimensional input lattices with the assurance that the solution is contained in the span of the qubits (see Theorem~\ref{theorem:spatial}). A key technical insight that leads to the bound is the invariance of the densest sub-lattice under LLL-reduction. In particular, there is always a solution that is part of an LLL-reduced basis. With the number of qubits bounded, we may invoke quantum algorithms such as Grover search~\cite{grover1996fast}, phase estimation followed by amplitude amplification~\cite{kitaev1995quantum}, and quantum Gibbs sampling~\cite{chifang2023quantum} to find a low-energy state. These algorithms at the minimum come with a quadratic speedup over exhaustive search, but we may hope for more. The prospect of using quantum Gibbs sampling is particularly enticing since there is recent work proving that low-energy states of sparse (local) Hamiltonians drawn from random ensembles can be found efficiently~\cite{chifang2023sparse}. We do not expect cryptographic instances of lattice problems to fall within these ensembles, but further investigation is warranted before speculation. In particular, are there natural ensembles of lattice problems, perhaps arising in chemistry or digital communication for which quantum Gibbs sampling finds densest sub-lattices fast? The aforementioned quantum algorithms require fault-tolerant quantum computers. When constrained to NISQ devices, we can look to VQAs such as QAOA to find the low-energy states. We empirically investigate the performance of QAOA in small dimensions in Sec.~\ref{sec:exp}. 

Our phrasing of $K$-DSP as a problem of finding first excited states of a Hamiltonian is not complexity theoretically surprising, since $K$-DSP is in QMA. But when evaluating precise parameters to instantiate post-quantum cryptosystems, a more fine-grained complexity analysis and cryptanalysis tailored to the problem are necessary, beyond the general complexity class the problem lies in. The quadratic speedup achieved through our careful encoding and preprocessing can necessitate doubling the secret key, signature sizes, etc. of cryptosystems built on K-DSP.

We finally investigate the hardness of $K$-DSP as $K$ increases. Hardness cannot decrease with $K$, since $K_1$-DSP can be embedded into $K_2$-DSP for $K_2>K_1$ by appending short orthogonal basis vectors. But how much harder is $K$-DSP for large $K$, in particular, when $K$ approaches half the ambient dimension? If the best-known $K$-DSP algorithms take significantly longer than the SVP algorithms, then there is cause to base post-quantum cryptosystems on $K$-DSP. A natural way to address this question is to try to solve $K$-DSP given oracle access to an SVP solver. The best-known SVP solvers, classical or quantum, take exponential time in the lattice dimension. With an SVP solver, we observe that the input to quantum algorithms can be preprocessed to be an HKZ-basis, a stronger guarantee than an LLL-basis. Exploiting this stronger structure, we devise a quantum algorithm that solves $K$-DSP with run-time exponent $(5KN\log{N})/2$ (see Theorem~\ref{theorem:kdsp-groverization}). Therefore, for fixed $K$, $K$-DSP is no more than polynomially harder than the SVP. Our algorithm is close in spirit to the aforementioned best-known classical $K$-DSP algorithm by Dadush and Micciancio. In comparison, our run-time exponent is worse by a small constant. But our algorithm is simpler in the sense of not requiring recursive calls to the SVP oracle. Whether the exponential dependence on $K$ can be reduced remains an important open question.  

In Sec.~\ref{sec:preliminaries}, we introduce relevant theoretical concepts regarding lattices and quantum algorithms. In Sec.~\ref{sec:quantumalgo}, we describe the mapping of the Densest Sub-lattice Problem to one of finding low-energy states of a Z spins Hamiltonian. We also prove bounds on the quantum resources and then develop an approach to penalize trivial solutions by computing an upper bound on the spectral gap. In Sec.~\ref{sec:exp}, we simulate a $2$-DSP quantum solver and present the results obtained. Finally, Sec.~\ref{sec:conclusions} discusses the conclusions and potential directions for future research. 

\section{\label{sec:preliminaries}Preliminaries}
\subsection{Lattices}
A lattice is simply a pattern of repeating points in $N$-dimensional space, for example, a section of the integer lattice in 3 dimensions is visualized in Fig.~\ref{fig:DSP_representation}. More formally, we define a lattice as a discrete free $\mathbb{Z}$-sub-module of the real space $\mathbb{R}^N$ for some finite dimension $N$,  endowed with the Euclidean metric. The dimension of the lattice is its rank as a $\mathbb{Z}$-module, which is at most $N$.
A lattice of dimension $r\leq N$ can be described by an ordered basis $B=(\mathbf{b}_1,\mathbf{b}_2,\ldots,\mathbf{b}_r) \subset \mathbb{R}^N$ of $\mathbb{R}$-linearly independent vectors, presented in our algorithms as the matrix $\mathbf{B}$ consisting of basis vectors as the rows. Throughout, we represent vectors as row vectors. The lattice generated by $\mathbf{B}$ is the integer linear combinations $\mathcal{L}(\mathbf{B}):=\sum_{i=1}^N x_i \mathbf{b}_i = \{\mathbf{x} \mathbf{B}
\mid \mathbf{x}\in \mathbb{Z}^N\}$ of the basis vectors. 

The computational complexity of the problems we study is not intrinsic to the input lattice but may depend on the choice of basis generating the input lattice. One notion that helps solve these problems faster is that of a `good' basis.
Informally, a basis that contains only short and nearly orthogonal vectors is said to be a good basis. On the contrary, a bad basis consists of long vectors with large colinearities. A lattice has infinitely many bases, including a small number of good ones and a very large number of bad ones.

The LLL algorithm~\cite{lenstra1982factoring} is a powerful polynomial-time lattice reduction technique that transforms a given basis of a lattice into a ``better" basis for the same lattice, consisting of shorter basis vectors. Nevertheless, the basis returned is not good enough to solve the SVP. Such vectors that correspond to the output that LLL-reduction provides satisfy the Lovász condition. This involves a bound on the Gram-Schmidt basis lengths $(\mathbf{b}_i^*)$.

The fundamental parallelepiped of a basis $(\mathbf{B})$ is defined as $P(\mathbf{B}) := \{ \mathbf{x} \mathbf{B}| \mathbf{x} \in \left[0,1)\right.^N\}$. The covolume $\text{vol}\left(\mathcal{L}(\mathbf{B})\right):= \sqrt{\text{det}(\mathbf{B} \mathbf{B}^T)}$ of the fundamental parallelepiped only depends on the lattice $\mathcal{L}(\mathbf{B})$ and is invariant of the choice of the basis. It is known as the covolume of the lattice. The Gram matrix of a basis $\mathbf{B}$, is given by $\mathbf{G} = \mathbf{B} \mathbf{B}^T$ with $\mathbf{G}_{ij} = \mathbf{b}_i\mathbf{b}_j^T$ being the entries of the matrix.

One of the most widely studied computational problem on lattices is the Shortest Vector Problem or SVP.
\begin{definition}\textnormal{(The Shortest Vector Problem)}. 
Given a basis $\mathbf{B} = (\mathbf{b}_1,..., \mathbf{b}_N)$ of a lattice $\mathcal{L}(\mathbf{B})$, find the shortest non-zero vector 
\begin{equation}
\arg\min_z\{\lvert\lvert z \rvert\rvert_2 : z \in \mathcal{L}(\mathbf{B})\backslash \{0\}\}.
\end{equation}
with respect to the Euclidean norm. 
\end{definition}

Let $\lambda_1(\mathcal{L}) := \min_z\{\lvert\lvert z \rvert\rvert : z \in \mathcal{L}\backslash \{0\}\}$ denote the length of the shortest vector. Minkowski's theorem~\cite{minkowski1910geometrie} gives an upper bound on $\lambda_1(\mathcal{L})$. A Euclidean ball $S \subset \mathbb{R}^N$ of volume $\textnormal{vol}(S) > 2^N  \sqrt{\textnormal{vol}(\mathcal{L})}$ contains at least one lattice point in $\mathcal{L}$ that is not the origin. Then, $\lambda_1(\mathcal{L}) \leq \sqrt{N} \textnormal{vol}(\mathcal{L})^{2/N}$. Note that, the LLL-algorithm cannot find a vector within the Minkowski bound efficiently.

The SVP plays a significant role in PQC since it is known to be NP-hard under randomized reductions~\cite{10.1145/276698.276705}. Consequently, some primitives in LBC use short vectors. 
Since the SVP has been of interest to the cryptology community, many generalizations have arisen from it. 
One of them is the Densest Sub-lattice Problem, which seeks the densest $K$-dimensional sub-lattice of an arbitrary lattice. This problem was initially introduced in~\cite{10.1007/11818175_7}, and subsequently, Dadush and Miccancio developed the fastest classical algorithm for it~\cite{doi:10.1137/1.9781611973105.79}.
\begin{definition}\textnormal{(The Densest Sub-lattice Problem~\cite{10.1007/11818175_7})}. 
Given a basis $\mathbf{B} = ( \mathbf{b}_1,..., \mathbf{b}_N )$ that describes a lattice $\mathcal{L}$, and an integer $K$ such that $1\leq K\leq N$, find a $K$-dimensional sub-lattice $\mathcal{\hat{L}} \subseteq \mathcal{L}$ such that $\textnormal{vol}(\mathcal{\hat{L}})$ is minimal.
\end{definition}
Rankin's constants~\cite{rankin_1955} $\gamma_{N,K}(\mathcal{L})$ are a generalization of Minkowski's bound and are defined as 
\begin{equation}
    \gamma_{N,K}(\mathcal{L}) =\textnormal{sup}_\mathcal{L}\left ( \min \frac{\textnormal{det}((\mathbf{b}_1,...,\mathbf{b}_K))}{\textnormal{det}(\mathcal{L})^{K/N}} \right )^2.
    \label{eq:rankin}
\end{equation}
They can be interpreted as a bound on the densest sub-lattice covolume.

\begin{figure*}[t]
\includegraphics[width=1.5\columnwidth]{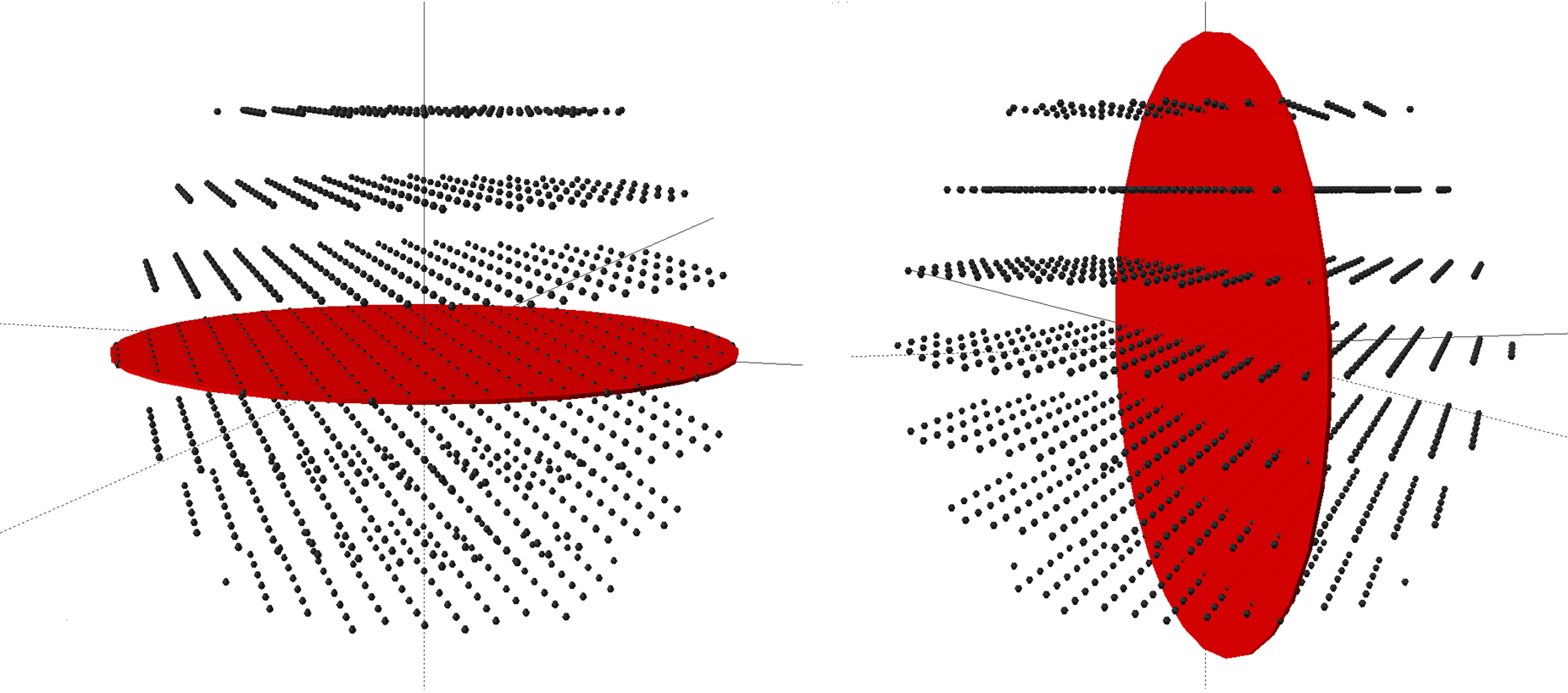}
\caption{The 3D lattice consists of sheets of stacked hexagonal honeycombs. The intersection of the red plane with the lattice defines a sub-lattice. The horizontal plane on the left carves a denser sub-lattice compared to the vertical plane on the right, evident from the larger number of intersection points in the red disk.}
\label{fig:kdsp_honeycomb}
\end{figure*}

In Fig.~\ref{fig:kdsp_honeycomb} we illustrate a case of the $K$-DSP for $K = 2$.  In a three-dimensional lattice, we examine two two-dimensional sub-lattices. Slicing along one axis and looking at the cross-section (red disk), reveals long thin rectangles of 1-meter width and 3-meter height. Slicing along another axis shows the honeycomb lattice with points arranged in equilateral triangles (1-meter sides). One can immediately see that the honeycomb lattice packs in many more points than the lattice of tall thin rectangles.

\begin{example}\textnormal{(DSP for $N = 3$ and $K = 2$). Assuming that we are given a $3$-dimensional input basis and that $K = 2$, the algorithm should return the two vectors that span the $2$-dimensional densest sub-lattice. This example is represented in Fig.~\ref{fig:DSP_representation}, where the red vectors represent the input and one solution to the problem could be given by the blue vectors. In this scenario, the input basis is $$ \mathbf{B}=\begin{bmatrix}
    1 & -1 & 0  \\
    0 & 1 & -1  \\
    0 & 0 & 1  
\end{bmatrix}$$ and a solution for the 2-DSP is given by $$\hat{\mathbf{B}}=\begin{bmatrix}
    1 & 0 & 0  \\
    0 & 1 & 0
\end{bmatrix}$$ embedded in the 2-dimensional subspace spanned by the $x,y$ axes. Here, the rows of the transformation matrix $\mathbf{X}$ are defined by $\mathbf{X} = \{(1,1,1),(0,1,1)\}$.}

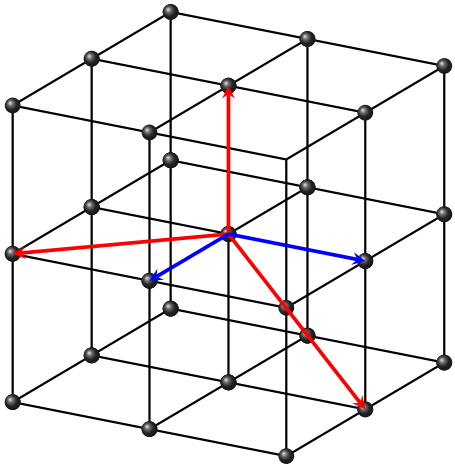
\begin{figure}[h]
\scalebox{0.65}{\begin{tikzpicture}[scale=3,tdplot_rotated_coords,
                    rotated axis/.style={->,purple,ultra thick},
                    blackBall/.style={ball color = black!80},
                    borderBall/.style={ball color = white,opacity=.25}, 
                    very thick]

\draw[->, thick] (1,1,1) -- (3,1,1) node[anchor=north east, scale = 2]{$x$};
\draw[->, thick] (1,1,1) -- (1,3,1) node[anchor=north west, scale = 2]{$y$};
\draw[->, thick] (1,1,1) -- (1,1,3) node[anchor=south, scale = 2]{$z$};

\foreach \x in {0,1,2}
   \foreach \y in {0,1,2}
      \foreach \z in {0,1,2}{

           \ifthenelse{  \lengthtest{\x pt < 2pt}  }{
             \draw[gray] (\x,\y,\z) -- (\x+1,\y,\z);
             \shade[rotated axis,blackBall] (\x,\y,\z) circle (0.05cm); 
           }{}

           \ifthenelse{  \lengthtest{\y pt < 2pt}  }{
               \draw[gray] (\x,\y,\z) -- (\x,\y+1,\z);
               \shade[rotated axis,blackBall] (\x,\y,\z) circle (0.05cm);
           }{}

           \ifthenelse{  \lengthtest{\z pt < 2pt}  }{
               \draw[gray] (\x,\y,\z) -- (\x,\y,\z+1);
               \shade[rotated axis,blackBall] (\x,\y,\z) circle (0.05cm);
           }{}

}

\shade[rotated axis,blackBall] (1,0,1) circle (0.05cm); 
\shade[rotated axis,blackBall] (1,2,1) circle (0.05cm); 
\shade[rotated axis,blackBall] (0,1,1) circle (0.05cm); 
\shade[rotated axis,blackBall] (2,1,1) circle (0.05cm);
\shade[rotated axis,blackBall] (1,1,2) circle (0.05cm);
\shade[rotated axis,blackBall] (1,1,0) circle (0.05cm);

\draw [-stealth, color = red, line width = 2pt](1,1,1) -- (2,0,1);
\draw [-stealth, color = red, line width = 2pt](1,1,1) -- (1,2,0);
\draw [-stealth, color = red, line width = 2pt](1,1,1) -- (1,1,2);

\draw [-stealth, color = blue, line width = 2pt](1,1,1) -- (2,1,1);
\draw [-stealth, color = blue, line width = 2pt](1,1,1) -- (1,2,1);
\end{tikzpicture}}
\caption{Representation of the 2-Densest Sub-lattice Problem for $N =3$. The red arrows represent a 3D input bad basis. One solution to the problem for $K = 2$ is given by the blue short vectors that span a 2D square lattice. The axis points in the positive direction.} 
\label{fig:DSP_representation}
\end{figure}

\textnormal{Notice that the 1-dimensional densest sub-lattice of any arbitrary lattice is by definition equivalent to solving the SVP. For instance, in Fig.~\ref{fig:DSP_representation}, the solution to the DSP for $K = 1$ would correspond to one of the blue vectors, which span a 1-dimensional sub-lattice. This is identical to solving the SVP within the 3D lattice. In this case, twice as many lattice points generated by (1,0,0) will fit in a given ball around the origin versus those generated by (2,0,0). Correspondingly, the $\sqrt{\text{det}(\mathbf{B} \mathbf{B}^T)}$ of the former sub-lattice is half that of the second. The other extreme, the $(N-1)$-DSP, is analogous to the SVP on a dual basis. In fact, for every $K$, the $K$-DSP restricted to $K$ is at least as hard as the SVP. Furthermore, we know that $K=N/2$ is the hardest instance since we can always add orthogonal vectors to the input basis and use a half-volume oracle to solve any $K$-DSP$_{K,N}$. The best-known algorithms for $K$-DSP with $K$ close to $N/2$ have $N^2$ in the runtime exponent, while SVP enumeration algorithms only have an $N$.}
\end{example}

 \begin{figure*}
\includegraphics[width=1.6\columnwidth]{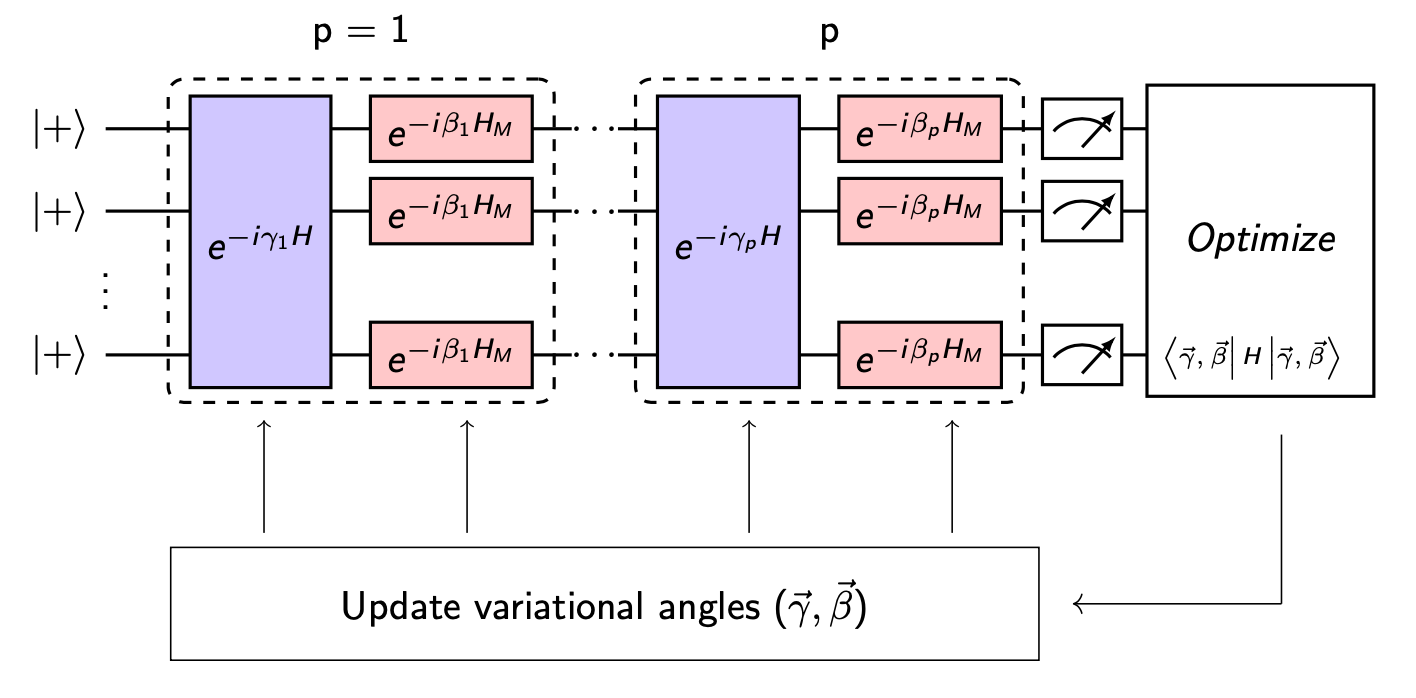}
\caption{Representation of the quantum circuit for the QAOA. The initial state is a superposition of all possible configurations. Then, $p$ layers are applied, each of them composed of the cost Hamiltonian $H$, which separates the states by their phase, and the rotation operator with the mixer Hamiltonian ($H_M$) in the exponent, transforming the phase into amplitude. In the end, the states of the qubits are measured, the output is post-processed, and the cost function is calculated using the initial parameters $(\boldsymbol{\gamma}, \boldsymbol{\beta})$. The angles are then updated, and the process repeats until a predefined stopping criterion, such as a target approximation ratio or stability in parameter updates, is satisfied.}
\label{fig:qaoa}
\end{figure*}
\subsection{Variational quantum algorithm}

Classical computers are unable to solve certain problems efficiently. For certain classes of problems, like period finding, we know large fault-tolerant quantum computers will offer speedups, and for other classes, we suspect they can offer speedups but have no guarantees. In the near term, however, these devices are not expected to be fault-tolerant. One approach to attempt quantum advantage in the NISQ era is Variational Quantum Algorithms (VQAs). The basic procedure for VQAs is the following:
\begin{enumerate}
    \item Prepare an initial state with $n$ available qubits $|0\rangle^n$.
    \item Apply a sequence of unitary gates 
    to construct the ansatz $|\psi(\boldsymbol{\theta})\rangle = U_p(\boldsymbol{\theta}_p),...,U_1(\boldsymbol{\theta}_1)|0\rangle^n$.
    \item Evaluate a cost function $C(\boldsymbol{\theta})$.
    \item Update parameters $\boldsymbol{\theta}$ via classical optimization to minimize the cost function $C(\boldsymbol{\theta})$. 
    \item Repeat the process many times until a stopping criterion is satisfied. The output of the algorithm is the parameterized quantum circuit defined by near-optimal parameters.
    \item Sample from the trained parameterized quantum circuit to obtain candidate solutions.
\end{enumerate} 
Picking an ansatz depends on a variety of factors, including resources available, the structure of the problem under consideration, and others. The cost function is designed so that it is minimized at the optimal solution, thus optimizing towards the minimum cost will result in a high probability of sampling good solutions.

However, it is important to remark on one of the main limitations of VQAs: Barren-Plateaus phenomenon~\cite{McClean_2018}. It consists of an effect produced when the gradient vanishes in the classical optimization process. As a consequence, the algorithm is not able to find the solution and the process can get trapped in local minima. 

One of the most widely used VQAs is the Quantum Approximate Optimization Algorithm (QAOA)~\cite{farhi2014quantum}. Given a cost Hamiltonian that encodes the solution of a combinatorial optimization problem in its ground state, the purpose of the algorithm is to return an approximate solution to the problem. It has been shown that for low-depth QAOA, speedups can be achieved in comparison to classical algorithms for certain instances~\cite{farhi2015quantum}. 
The QAOA Hamiltonian integrates the principles of the Transverse Ising model~\cite{Kadowaki_1998} 
and the $Z$-cost Hamiltonian. The Transverse Ising model represents a system of interacting spins in the presence of a transverse field induced by Pauli-$X$ operators. The spins are influenced by the Ising coupling term and the transverse field. Thus, the Hamiltonian can leverage the interaction between the spins and the transverse field to explore the solution space efficiently. In contrast, the $Z$-cost Hamiltonian drives the system towards optimal solutions by manipulating the quantum state based on the cost function. 

Concretely, in the case of QAOA for solving problems in the Transverse Ising model, one starts with an initial state $|\psi_0\rangle$, usually a superposition of all possible solutions as represented in Fig.~\ref{fig:qaoa}. Then, a sequence of two unitary operators $U(C, \gamma) = \exp{(-i\gamma H)}$ and  $U(B, \beta) = \exp{(-i\beta\sum_{i = 1}^n X_i)}$, generated by the cost Hamiltonian and the mixer Hamiltonian, are applied $p$ times. The parameterized quantum circuit simulation allows us to evolve the quantum system under the action of the two unitaries defined, $p$ times. Finally, the state of the qubits is measured, and the cost is updated. After each iteration, the angles $(\boldsymbol{\gamma}, \boldsymbol{\beta})$ are adjusted, and all the steps are repeated until convergence of the cost function.
When increasing $p$, the quality of the approximation of the final state to the ground state of the cost Hamiltonian is enhanced. In the limit $p \rightarrow \infty$, the algorithm can emulate the adiabatic quantum algorithm, achieving a high overlap with the ground state, provided that the classical optimizer finds near-optimal parameters. 

\section{\label{sec:quantumalgo}A quantum algorithm for the Densest Sub-lattice Problem}

In~\cite{doi:10.1137/1.9781611973105.79} it is shown that if $\mathcal{L}$ is an $N$-dimensional lattice, $\mathcal{\hat{L}} \subseteq \mathcal{L}$ a $K$-dimensional sub-lattice of minimum determinant and $\mathbf{v} \in \mathcal{L}$ be any lattice vector, then, either $\mathcal{\hat{L}}$ contains all shortest lattice vectors or the length of all vectors that span the densest sub-lattice is bounded by $\lambda_K(\mathcal{\hat{L}}) \leq K\lambda_1(\mathcal{L})$. This lemma enables to develop a classical enumerative algorithm for the $K$-DSP with exponential running time $K^{O(K N)}$. 

In this section, we derive the mapping of the Densest Sub-lattice Problem to a $Z$ spins Hamiltonian such that its first excited state $E_1$ corresponds to the densest sub-lattice of a given ambient lattice. We also provide the space requirements for the quantum algorithm, as well as an approach to penalize trivial solutions of the $K$-DSP by upper-bounding the spectral gap. 

\subsection{Hamiltonian formulation for the $K$-DSP}
Given a basis $\mathbf{B}$ that describes a $N$-dimensional lattice $\mathcal{L}$, we seek $K$-linearly independent vectors $\mathbf{v}_1, ... , \mathbf{v}_K$ that span a sub-lattice $\mathcal{\hat{L}} \subseteq \mathcal{L}$ with the smallest determinant. 

Let us consider that $K = 2$, for simplicity, and that $N$ can take any value. Nevertheless, one can see App.~\ref{app:appendixA} for the formulation of the $K$-DSP for $ K = 3$. Then, the goal is to find two linearly independent vectors that span the densest sub-lattice, or equivalently, find the $2$-dimensional sub-lattice with the smallest covolume or determinant. Thus, we can express these two lattice points as a linear combination of input basis vectors such that
\begin{eqnarray}\label{equation_xy}
     \mathbf{v}_1 &= \mathbf{x} \mathbf{B} = x_1  \mathbf{b}_1 + ... + x_N  \mathbf{b}_N, \nonumber\\ 
     \vspace{5mm}
    \mathbf{v}_2 &= \mathbf{y}\mathbf{B} = y_1  \mathbf{b}_1 + ... + y_N  \mathbf{b}_N.  
    \label{eq:vectors}
\end{eqnarray}
Recall that the Gram matrix is a square matrix for any $K$ and $N$. The Gramian, which is the determinant of the Gram matrix, is equal to the covolume squared of a lattice $\mathcal{L}$,
\begin{equation}
   \text{vol}(\mathcal{L}) = \sqrt{\text{det}(\mathbf{G(\mathbf{b}_1,..., \mathbf{b}_N)})}.
\end{equation}
Therefore, the covolume of a sub-lattice $\mathcal{\hat{L}}$ of dimension $K = 2$ of an $N$-dimensional ambient lattice $\mathcal{L}$ can be expressed as 
\begin{equation}
   \text{vol}(\mathcal{\hat{L}})^2 = 
   \begin{vmatrix}
       \langle \mathbf{v}_1 | \mathbf{v}_1 \rangle &  \langle \mathbf{v}_1 | \mathbf{v}_2 \rangle \\
       \langle \mathbf{v}_2 | \mathbf{v}_1 \rangle &  \langle \mathbf{v}_2 | \mathbf{v}_2 \rangle
   \end{vmatrix}.
\end{equation}

Using Eq.~(\ref{eq:vectors}) in the former determinant, the following relation is obtained between the covolume of the sub-lattice and the coefficients $\mathbf{x}$ and $\mathbf{y}$,
\begin{equation}
   \text{vol}(\mathcal{\hat{L}})^2 = \sum_{i,j,k,l}^N x_i x_j y_k y_l (\mathbf{G}_{ij} \mathbf{G}_{kl} - \mathbf{G}_{ik} \mathbf{G}_{jl}).
    \label{eq: cost}
\end{equation}
\noindent
The different inner products $\mathbf{b}_i\mathbf{b}_j$ have been written as $\mathbf{G}_{ij}$ and therefore, are constants given as inputs of the algorithm. 

The main goal of the problem is to find the integer variables $x_i$ and  $y_j$ such that they minimize Eq.~(\ref{eq: cost}). This equation simulates the cost function and returns the different covolumes that can be found within the system. To transform the equation into a Hamiltonian we have to consider that the eigenvalues must correspond to the energies of the system, which in this context are the squared covolumes. Then, the eigenvectors of this Hamiltonian are given by the different sub-lattices that can be constructed. In this way, when applying the problem Hamiltonian over an eigenstate (i.e. a sub-lattice) the energy returned will be the covolume squared of the configuration considered. 
Note that, to be able to obtain the integer values that define the coefficients in Eq.~(\ref{eq: cost}), the integer coefficients need to be modified to binary variables. 

In~\cite{Joseph_2021} the authors propose a qudit mapping that will be useful for the simulation of our algorithm. The Binary-encoded qudits mapping allows to interpret a binary string of qubits as integers. Let us assume that we have four qubits per qudit available as in Fig.~\ref{fig:grid}. Then, the Hilbert space of the qudit operator will be spanned by $2^4$ states.

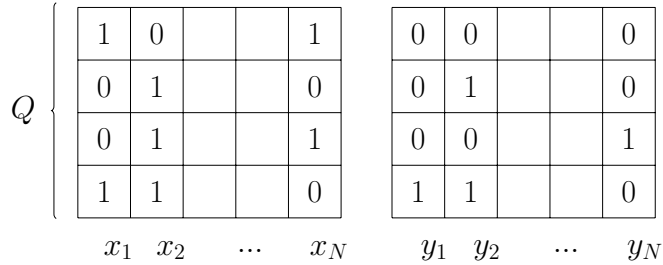
\begin{figure}[ht]
\LARGE{
\begin{flushright}
\scalebox{0.7}{\begin{tikzpicture}[remember picture]
\draw[step=1cm,color=black] (-1,-1) grid (4,3);
\node at (-0.5,+2.5) {1};
\node at (-0.5, 1.5) {0};
\node at (-0.5,0.5) {0};
\node at (-0.5,-0.5) {1};

\node at (0.5,+2.5) {0};
\node at (0.5, 1.5) {1};
\node at (0.5, 0.5) {1};
\node at (0.5,-0.5) {1};

\node at (+3.5,+2.5) {1};
\node at (+3.5,+1.5) {0};
\node at (+3.5, 0.5) {1};
\node at (+3.5, -0.5) {0};
\end{tikzpicture}}
\hspace{3mm}
\scalebox{0.7}{\begin{tikzpicture}
\draw[step=1cm,color=black] (-1,-1) grid (4,3);
\node at (-0.5,+2.5) {0};
\node at (-0.5, 1.5) {0};
\node at (-0.5,0.5) {0};
\node at (-0.5,-0.5) {1};

\node at (0.5,+2.5) {0};
\node at (0.5, 1.5) {1};
\node at (0.5, 0.5) {0};
\node at (0.5,-0.5) {1};

\node at (+3.5,+2.5) {0};
\node at (+3.5,+1.5) {0};
\node at (+3.5, 0.5) {1};
\node at (+3.5, -0.5) {0};
\end{tikzpicture}}

\scalebox{0.7}{\begin{tikzpicture}[remember picture, overlay]

\draw [decorate,
    decoration = {brace}] (-11.2,1.1) --  (-11.2,5.2) node[midway, left]{$Q\hspace{2mm}$}; 
\end{tikzpicture}
\begin{tikzpicture}[remember picture, overlay]
\node at (-10.2,0.5) {$x_1$};
\node at (-9.2,0.5) {$x_2$};
\node at (-7.7,0.5) {$...$};
\node at (-6.2,0.5) {$x_N$};
\node at (-4.2,0.5) {$y_1$};
\node at (-3.2,0.5) {$y_2$};
\node at (-1.7,0.5) {$...$};
\node at (-0.2,0.5) {$y_N$};
\end{tikzpicture}}
\end{flushright}}
\caption{Diagram representing two $4\times N$ dimensional grids of qubits. Each of them is associated with one of the vectors that span the sub-lattice of dimension $K$. The columns of the 2D array represent the integer values that can take each of the coefficients $x_i$ and $y_j$. The columns, each of which represents a qudit, are composed of four qubits.}
\label{fig:grid}
\end{figure}
The measurements are performed in the computational basis $Z$, so return the eigenvalue $+1(-1)$. Since we would like to work with binary strings, we can transform the $Z$-basis to the $(0,1)$ basis, via the operator
\begin{equation}
    \hat{O} = \frac{\mathds{1}- Z}{2}.
    \label{eq:operator}
\end{equation}
 
From $\hat{O}$ we can now construct the qudit operator
\begin{equation}
    \hat{Q}_{bin}^{(i)} = \sum_{w = 0}^{m} 2^{w}\hat{O} - 2^{m} \mathds{1},
    \label{eq:mapping}
\end{equation}
where the first term contributes by returning the integer associated to the $i^{th}$ qudit state when the operator is applied, and the second term is used to shift the range down to be symmetric about zero and thus allow for negative coefficients. Rearranging Eq.~(\ref{eq:mapping}) using Eq.~(\ref{eq:operator}) we obtain
\begin{equation}
\hat{Q}_{bin}^{(i)} = -\frac{\mathds{1}}{2}-\sum_{w = 0}^{m} 2^{w-1} Z_{wi},
\label{eq:mapping2}
\end{equation}
which translates the measured qubits of the grid's columns to integers in the range $[-2^{m}, 2^{m}-1]$. 

Thus, the problem Hamiltonian for the $2$-DSP results in
\begin{equation}
   H_{2DSP} = \sum_{i,j,k,l = 1}^N \hat{Q}^{(i)}\hat{Q}^{(j)}\hat{Q}^{(k)}\hat{Q}^{(l)} (\mathbf{G}_{ij}\mathbf{G}_{kl}-\mathbf{G}_{ik}\mathbf{G}_{jl})
   \label{eq:ham_pen}
\end{equation}
which comes from Eq.~(\ref{eq: cost}) when the coefficients $\boldsymbol{x}, \boldsymbol{y}$ are replaced by the qudits operators.
Expanding the Hamiltonian for $K = 2$ we obtain 
\begin{align}
&H_{2DSP} = \sum_{i,j,k,l = 1}^N 2^{-4}\left[\sum_{a,b,c,d=0}^{m} 2^{a+b+c+d} Z_{ai}Z_{bj}Z_{ck}Z_{dl}  \right. \nonumber \\
&\left.+\sum_{a,b,c =0}^{m} 2^{a+b+c} (((Z_{ai}+Z_{aj})Z_{bk} + Z_{ai}Z_{bj})Z_{cl}  \right.\nonumber \\
& \left. + Z_{ai}Z_{bj}Z_{ck}) + \sum_{a,b =0}^{m} 2^{a+b}((Z_{ai} + Z_{aj} + Z_{ak})Z_{bl}  \right.\nonumber \\
& \left. + (Z_{ai}+ Z_{aj})Z_{bk} + Z_{ai}Z_{bj} + \sum_{a = 0}^{m} 2^{a}(Z_{ai} + Z_{aj} \right.\nonumber \\
& \left.+ Z_{ak} + Z_{al}) + 1
\right] (\mathbf{G}_{ij}\mathbf{G}_{kl}-\mathbf{G}_{ik}\mathbf{G}_{jl}),
\label{eq:z-spins-ham}
\end{align}

\noindent
which corresponds to a fully connected Hamiltonian, and has up to 4-body interactions. The Eq.~(\ref{eq:z-spins-ham}) is the one that will be used for the simulation of the quantum $2$-DSP solver in Sec.~\ref{sec:exp}. 

To generalize the problem Hamiltonian to any $K$, consider the $K$-DSP for a $N$-dimensional lattice spanned by $\mathbf{B} = (\mathbf{b}_1,...,\mathbf{b}_N)$. The goal is to find $K$-linearly independent vectors $\hat{\mathbf{{B}}} = (\mathbf{v}_1,...,\mathbf{v}_K)$, which generate $\mathcal{\hat{L}}$ with the smallest covolume. 
Here, the Gramian of $\hat{\mathbf{{B}}}$ is given by 
\begin{equation}
\textnormal{det}(\mathbf{G}(\mathbf{v}_1,...,\mathbf{v}_K)) = 
   \begin{vmatrix}
\langle\mathbf{v}_{1}, \mathbf{v}_{1} \rangle&\hdots & \langle\mathbf{v}_{1}, \mathbf{v}_{K} \rangle \\
\vdots & \ddots & \vdots \\
\langle\mathbf{v}_{K}, \mathbf{v}_{1} \rangle &\hdots & \langle\mathbf{v}_{K}, \mathbf{v}_{K} \rangle
\end{vmatrix}.
\end{equation}
The determinant of a $K\times K$ square matrix $\mathbf{A}$ with entries $a_{ij}$, can be computed using Leibniz formula
\begin{equation}
    \textnormal{det}(\mathbf{A}) = \sum_{\tau \in S_K} sgn(\tau) \prod^K_{i=1} a_{i,\tau(i)},
\end{equation}
where the sum runs over all the permutations $\tau$ of the symmetric group $S_N$, and $sgn \in \{ \pm 1\}$ is the sign of $\tau$. One can write the generalized covolume of a $K$-dimensional sub-lattice in terms of the determinant of the rank $K$ Gram matrix, formed by the $K$ lattice vectors returned

\begin{equation}
    \textnormal{vol}(\hat{\mathcal{L}})^2 = \sum_{\tau \in S_K} sgn(\tau) \prod^K_{i=1} \langle\mathbf{v}_{i}, \mathbf{v}_{\tau(i)}\rangle. 
\end{equation}

To encode the Hamiltonian of the general $K$-DSP, one can write the vectors $\mathbf{v}_i$ in terms of the qudit operators
\begin{equation}
    H_{DSP} =\sum_{\tau \in S_K} sgn(\tau) \prod^K_{i=1} \left (\sum^N_{\alpha, \beta = 1}\hat{Q}^{(i)}_{\alpha}\hat{Q}^{(\tau(i))}_{\beta} \mathbf{G}_{\alpha, \beta}\right)
\end{equation}
where the operators $\hat{Q}^{(i)}_{\alpha}$ and $\hat{Q}^{(\tau(i))}_{\beta}$ act on the qudits $\alpha$ and $\beta$ within a register, and $i, \tau(i)$ index the qubit grids. Each qubit grid describes a single vector, two of which are shown in Fig.~\ref{fig:grid}. Here, $\mathbf{G}$ is the Gram matrix of the full $N$-dimensional input basis $\mathbf{B}$. The output of this operation consists of two integer values associated with the classical coefficients $x_{i,\alpha}$ and $x_{\tau(i),\beta}$ that generate the solution vectors $\mathbf{v}_i$ and $\mathbf{v}_{\tau(i)}$. The eigenenergies of $H_{DSP}$ are then the $\textnormal{vol}(\hat{\mathcal{L}})^2$ where the $\hat{\mathcal{L}}$ is the sub-lattice corresponding to the relevant eigenvector.

One appeal of using the Leibnitz formula for the determinant is that it gives Hamiltonians that are 2$K$-local. Since even 3-local Hamiltonians are QMA-hard~\cite{qma_hard}, in the worst case, locality does not help in the computation of low-energy states. However, for local Hamiltonians drawn from certain natural random ensembles, locality does seem to help~\cite{chifang2023sparse}. To efficiently implement the Hamiltonian, the summation over the symmetric group is done in superposition. The operators inside the summation indexed by permutations are implemented using conditional gates. An alternative to the Leibnitz formula is to use an efficient arithmetic circuit to compute the determinant, such as the one derived through Gaussian elimination. 

\subsection{Classical preprocessing and spatial bound}
The coordinates of the unknown vectors (such as $\mathbf{x}$ and $\mathbf{y}$ in Eq.~(\ref{equation_xy})) in the Hamiltonian formulation are apriori unbounded integers. Therefore, the naive search space is countably infinite. To translate the Hamiltonian formulation into a quantum algorithm, it is necessary to confine the solution space to be finite, preferably small. In particular, this bounds the number of qubits required by the algorithm. For the SVP, the Minkowski bound readily confines the search space. For $K$-DSP with $K>1$, bounding the search space is far more intricate. We devise a polynomial-time classical preprocessing algorithm to resolve this problem.

The classical preprocessing algorithm takes a lattice basis $\mathbf{B}_{in}$ as the input and outputs a preprocessed gap-free basis $\mathbf{B}_{P}$ in polynomial time. First, it runs the LLL algorithm on $\mathbf{B}_{in}$ to get another basis $\mathbf{B}_{L}$ generating the same lattice. This $\mathbf{B}_{L}$ either has a gap or is gap-free. This notion of a Gap is one that we define and can be tested in polynomial time \footnote{An alternative characterization of gap that can be found in the literature is between the successive minima of the lattice when there exists an index $p$ s.t $\lambda_{p+1}(L)\gg\lambda_{p}(L)$. A lattice that has a gap larger than $O(N^{1/2})$ according to one definition has a gap also with the other definition and vice versa.}
. It is not intrinsic to a lattice and may depend on the basis. 
\begin{definition}\label{definition_gap}\textnormal{(Gap).}
A basis $\mathbf{B}=(\mathbf{b}_1,\ldots,\mathbf{b}_N)$ is defined to have a Gap if there exists an index $r$ such that, $\max(||\mathbf{b}_1^*||,...,||\mathbf{b}_r^*||) < \min(||\mathbf{b}_{r+1}^*||,...,||\mathbf{b}_N^*||)$,
where $(\mathbf{b}_1^*, \mathbf{b}_2^*,\ldots,\mathbf{b}_N^*)$ is the result of the Gram-Schmidt orthogonalization obtained from $\mathbf{B}$.
\end{definition}

If $\mathbf{B}_{L}$ is a gap-free basis, the preprocessing algorithm outputs $\mathbf{B}_{L}$ as $\mathbf{B}_{P}$. This is the typical case since generic lattices are gap-free as a consequence of the Gaussian heuristic: the presence of a gap would mean that at least one of its projected lattices has a shortest vector much below the Gaussian heuristic, which is highly improbable. The quality of the preprocessed basis $\mathbf{B}_{P}=\mathbf{B}_{L}$ is quantified in Lemma~\ref{lem:gap-free-lll}.  
\begin{lemma}[Gap-free-LLL]\label{lem:gap-free-lll}
If a LLL-reduced basis $\mathbf{B}$ of $\mathcal{L}$ is gap-free, then $||\mathbf{B}^*||\leq (4/3 +\varepsilon)^{(N-1)/4}\textnormal{vol}(\mathcal{L})$.
\end{lemma}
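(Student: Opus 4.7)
Let $\ell_i := \|\mathbf{b}_i^*\|$ and $\alpha := 4/3+\varepsilon$. LLL-reducedness yields the Lov\'asz inequality $\ell_{i+1}^2 \geq \alpha^{-1}\ell_i^2$, which iterates to the chain bound $\ell_j \geq \alpha^{-(j-i)/2}\ell_i$ for every $1\leq i\leq j\leq N$. The plan is to lower-bound every $\ell_i$ in terms of $\ell_{k^*} := \max_i \ell_i = \|\mathbf{B}^*\|$ (where $k^*$ is an index at which the maximum is attained), multiply to lower-bound $\mathrm{vol}(\mathcal{L})=\prod_i \ell_i$, and rearrange.

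For $j\geq k^*$, the chain bound directly gives $\ell_j \geq \ell_{k^*}\alpha^{-(j-k^*)/2}$. For $i<k^*$, by contrast, LLL alone is silent: Gram--Schmidt norms may jump upward arbitrarily between consecutive indices, so $\ell_i$ could a priori be much smaller than $\ell_{k^*}$. Here the gap-free hypothesis is essential. Applying it at $r = k^*-1$ yields
\[
\max_{i\leq k^*-1}\ell_i \;\geq\; \min_{j\geq k^*}\ell_j \;\geq\; \ell_{k^*}\,\alpha^{-(N-k^*)/2},
\]
so there is an index $p_1 < k^*$ with $\ell_{p_1}\geq \ell_{k^*}\alpha^{-(N-k^*)/2}$, and the chain bound then propagates this lower bound forward to every $i \in [p_1,\,k^*-1]$.

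The strategy is then to iterate: reapply gap-freeness at $r = p_1-1$, then at $r = p_2-1$, and so on, producing a descending chain $k^* > p_1 > p_2 > \cdots > p_T = 1$ with cumulative exponents $E_t$ such that $\ell_{p_t} \geq \ell_{k^*}\alpha^{-E_t}$; the chain bound then fills in the intermediate indices between successive $p_t$'s. Multiplying the resulting per-index lower bounds across $i=1,\ldots,N$ yields $\mathrm{vol}(\mathcal{L}) \geq \ell_{k^*}^{N}\alpha^{-N(N-1)/4}$, which on taking $N$-th roots and rearranging gives the claimed upper bound $\|\mathbf{B}^*\| = \ell_{k^*} \leq \alpha^{(N-1)/4}\mathrm{vol}(\mathcal{L})^{1/N}$.

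The main technical obstacle is the aggregate bookkeeping of the chained gap-free/LLL exponents --- showing that the total deficit $\sum_i \log(\ell_{k^*}/\ell_i)$ never exceeds $\tfrac{N(N-1)}{4}\log\alpha$, uniformly over admissible gap-free LLL profiles, and not merely in the favorable cases where the iteration terminates quickly. The tight configurations are the geometric profiles $\ell_i = \ell_1\alpha^{-(i-1)/2}$ (and their reverses), for which $k^*=1$ and the statement collapses to the classical LLL bound on $\|\mathbf{b}_1\|$; for every interior $k^*$, gap-freeness forces additional structure on the left side of the profile that strictly tightens the constant, so the extremum is always attained in the classical case.
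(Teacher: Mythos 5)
Your overall plan is sound and the individual ingredients are correct (the Lov\'asz chain bound $\ell_j \geq \alpha^{-(j-i)/2}\ell_i$ for $i\leq j$, the application of gap-freeness at $r=k^*-1$ and then iteratively at $r=p_t-1$, and the reading of the conclusion as $\|\mathbf{B}^*\|\leq \alpha^{(N-1)/4}\mathrm{vol}(\mathcal{L})^{1/N}$, which is indeed what the lemma means). But the proof has a genuine gap, and it is exactly the one you flag yourself: the claim that the accumulated exponents satisfy $\sum_i e_i \leq N(N-1)/4$ for \emph{every} admissible gap-free LLL profile is asserted (``gap-freeness forces additional structure \dots so the extremum is always attained in the classical case'') but never proved. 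This is the entire quantitative content of the lemma; without it you have a strategy, not a proof. The assertion happens to be true, and your iteration can be closed with one additional observation: order the indices by the step at which their lower bound is certified. The index $k^*$ starts with exponent $0$; every forward chain step certifies a new exponent equal to a previously certified one plus $1/2$; and every gap-freeness step certifies a new exponent equal to the current \emph{maximum} certified exponent (not more). Hence the $m$-th certified exponent is at most $(m-1)/2$, so the sorted exponent profile is dominated by $0,\tfrac12,1,\dots,\tfrac{N-1}{2}$ and $\sum_i e_i \leq \sum_{m=1}^{N}\tfrac{m-1}{2} = \tfrac{N(N-1)}{4}$. With that paragraph added, your argument is complete.

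For comparison, the paper avoids this bookkeeping entirely by working with the \emph{sorted} multiset $x_1\leq\cdots\leq x_N$ of Gram--Schmidt norms: if $x_{p+1}>\sqrt{4/3}\,x_p$ for some $p$, color indices with $\|\mathbf{b}_j^*\|\leq x_p$ blue and those with $\|\mathbf{b}_j^*\|\geq x_{p+1}$ red; the Lov\'asz condition forbids a red index from being immediately followed by a blue one, so all blue indices precede all red ones positionally, which is precisely a gap. Contrapositively, gap-freeness forces $x_{i+1}\leq\sqrt{4/3}\,x_i$ for all $i$, and the bound $x_N\leq(4/3+\varepsilon)^{(N-1)/4}(\prod_j x_j)^{1/N}$ follows in one line. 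That route replaces your index-by-index certification and its worst-case analysis with a single combinatorial observation about where large and small norms can sit in an LLL-reduced basis; you may want to adopt it, or at least note that it is what makes the constant come out tight without case analysis on the position of $k^*$.
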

\begin{proof}
Let $x_i$ denote the values of the $N$ norms $\mathbf{b}_j^*$ sorted by increasing order. Then if there exists an index $p$ s.t. $x_{p+1}>\sqrt{4/3} x_{p}$, let us paint in blue the indexes s.t. $\|\mathbf{b}_j^*\|\leq x_p$ and in red those where $\|\mathbf{b}_j\| \geq x_{p+1}$. Because of Lovász conditions~\cite{lenstra1982factoring}, a red index cannot be followed by a blue index in the basis, so the only viable coloring is to have all the blue indexes followed by all the red indexes. Then necessarily, $(\|\mathbf{b}_1^*\|, \dots, \|\mathbf{b}_p^*\|)$ are all $\leq x_p$ and $(\|\mathbf{b}_{p+1}^*\|, \dots, \|\mathbf{b}_N^*\|)$ are all $\geq x_{p+1}$, and the basis has a gap.
Reciprocally, a gap-free LLL-reduced basis satisfies $x_i \leq x_{i+1} \leq \sqrt{4/3} x_i$ for all $i \in [1, N-1]$, and therefore, $x_N/(\prod x_j)^{1/N} = \max{\mathbf{b}_j^*}/\textnormal{vol}(\mathcal{L})^{1/N} \leq \left(\frac{4}{3}\right)^{(N-1)/4}$.
\end{proof}

If $\mathbf{B}_{L}$ is not a gap-free basis, we invoke Lemma~\ref{lem:gap-reduction} (which builds on Lemma~\ref{lem:sub-lattice-hnf}) to reduce the dimension of the problem in polynomial time. In particular, Lemma~\ref{lem:gap-reduction} identifies a gap-free LLL-reduced basis as the preprocessed basis $\mathbf{B}_{P}$ to be fed as the input to a $K$-DSP problem in fewer dimension. Lemma~\ref{lem:relative-basis-size} quantifies the quality of $\mathbf{B}_{P}$.

\begin{lemma}[Dual of sub-lattice]\label{lem:sub-lattice-hnf}Let $\mathcal{L}$ be a lattice and $\mathbf{B}$ its basis, and $\hat{\mathcal{L}}\subseteq \mathcal{L}$ a sub-lattice of dimension $D$. Let $K$ be the smallest index s.t. $\hat{\mathcal{L}}\subseteq L(\mathbf{b}_1,\dots,\mathbf{b}_K)$, then there exists a basis $\mathbf{C}=(\mathbf{c}_1,\dots,\mathbf{c}_D)$ of $\hat{\mathcal{L}}$ such that $\|\mathbf{c}_D^*\|\geq\|\mathbf{b}_K^*\|$.
\end{lemma}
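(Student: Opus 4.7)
The plan is to construct a specific basis of $\hat{\mathcal{L}}$ using a Hermite Normal Form (HNF) decomposition with respect to $\mathbf{B}$, and then deploy a straightforward projection argument to lower-bound the final Gram--Schmidt norm.

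First, I would express every element of $\hat{\mathcal{L}}$ in $\mathbf{B}$-coordinates. Since $\hat{\mathcal{L}}\subseteq L(\mathbf{b}_1,\dots,\mathbf{b}_K)$, the relevant coordinates are only $j \le K$. Computing the row-HNF of the resulting integer matrix yields a basis $\mathbf{C}=(\mathbf{c}_1,\dots,\mathbf{c}_D)$ of $\hat{\mathcal{L}}$ with a staircase structure: $\mathbf{c}_i=\sum_{j=1}^{K_i} a_{ij}\mathbf{b}_j$ with $a_{i,K_i}\ne 0$ and strictly increasing pivot indices $K_1<K_2<\cdots<K_D$. The minimality of $K$ in the hypothesis forces $K_D=K$: otherwise $K_D<K$ would give $\hat{\mathcal{L}}\subseteq L(\mathbf{b}_1,\dots,\mathbf{b}_{K_D})$, contradicting the choice of $K$.

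Next I would exploit the inclusion $\mathrm{span}(\mathbf{c}_1,\dots,\mathbf{c}_{D-1})\subseteq\mathrm{span}(\mathbf{b}_1,\dots,\mathbf{b}_{K_{D-1}})\subseteq\mathrm{span}(\mathbf{b}_1,\dots,\mathbf{b}_{K-1})$. Let $V_1=\mathrm{span}(\mathbf{c}_1,\dots,\mathbf{c}_{D-1})$ and $V_2=\mathrm{span}(\mathbf{b}_1,\dots,\mathbf{b}_{K-1})$, so $V_1\subseteq V_2$ and hence $V_2^\perp\subseteq V_1^\perp$. For any vector $v$, the Pythagorean identity gives $\|\pi_{V_1^\perp}(v)\|^2=\|\pi_{V_2^\perp}(v)\|^2+\|\pi_{V_2\cap V_1^\perp}(v)\|^2$, so $\|\pi_{V_1^\perp}(v)\|\ge\|\pi_{V_2^\perp}(v)\|$. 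Applied to $v=\mathbf{c}_D$, the left side is precisely $\|\mathbf{c}_D^*\|$ by definition of the Gram--Schmidt process on $\mathbf{C}$.

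For the right side, since $\mathbf{c}_D=\sum_{j=1}^K a_{Dj}\mathbf{b}_j$ and $\pi_{V_2^\perp}$ kills $\mathbf{b}_1,\dots,\mathbf{b}_{K-1}$ while sending $\mathbf{b}_K$ to $\mathbf{b}_K^*$, we obtain $\pi_{V_2^\perp}(\mathbf{c}_D)=a_{DK}\,\mathbf{b}_K^*$. Because $a_{DK}\in\mathbb{Z}\setminus\{0\}$, we have $|a_{DK}|\ge 1$, and chaining the inequalities yields
\begin{equation*}
\|\mathbf{c}_D^*\|\;\ge\;\|\pi_{V_2^\perp}(\mathbf{c}_D)\|\;=\;|a_{DK}|\cdot\|\mathbf{b}_K^*\|\;\ge\;\|\mathbf{b}_K^*\|,
\end{equation*}
which is the claim. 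The only genuinely delicate step is justifying that the HNF construction really does produce pivots ending at $K_D=K$; everything else is standard linear algebra. No recursion or additional lattice machinery is needed.
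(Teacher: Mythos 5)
Your proposal is correct and follows essentially the same route as the paper's proof: put the coordinate matrix of $\hat{\mathcal{L}}$ with respect to $(\mathbf{b}_1,\dots,\mathbf{b}_K)$ into Hermite normal form to obtain a staircase basis whose last pivot sits at column $K$, then observe that $\lspan(\mathbf{c}_1,\dots,\mathbf{c}_{D-1})\subseteq\lspan(\mathbf{b}_1,\dots,\mathbf{b}_{K-1})$ and conclude $\|\mathbf{c}_D^*\|\geq |a_{DK}|\,\|\mathbf{b}_K^*\|\geq\|\mathbf{b}_K^*\|$. Your write-up is in fact a bit more explicit than the paper's on the two points it leaves implicit (that minimality of $K$ forces the last pivot to be $K$, and the nested-subspace projection inequality), so there is nothing to fix.
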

\begin{proof}
There exists a $D\times K$ integer matrix $\mathbf{V}$ s.t. $\mathbf{V}(\mathbf{b}_1,\dots,\mathbf{b}_K)$ is a basis of $\hat{\mathcal{L}}$, furthermore, the last column of $\mathbf{V}$ is non-zero. Let $\mathbf{V}'$ be the Hermite normal form of $\mathbf{V}$, then $\mathbf{C}=\mathbf{V}'(\mathbf{b}_1,\dots,\mathbf{b}_K)$ is still a basis of $\hat{\mathcal{L}}$, and $\mathbf{V}'$ has a single non-zero integer coefficient in its last column, in position $(D,K)$. It follows that $\lspan(\mathbf{c}_1,\dots,\mathbf{c}_{D-1})\subseteq \lspan(\mathbf{b}_1,\dots,\mathbf{b}_{K-1})$ and $\mathbf{c}_K = \mathbf{v}_{D,K}\mathbf{b}_K + \mathbf{t}$ where $\mathbf{t}\in\lspan(\mathbf{b}_1,\dots,\mathbf{b}_{K-1})$. Therefore, $\|\mathbf{c}_K^*\| \geq \|\mathbf{v}_{D,K}\mathbf{b}_K^*\| \geq \|\mathbf{b}_K^*\|$. 
\end{proof}

\begin{lemma}[Gap-dimension-reduction]\label{lem:gap-reduction}
If $\mathbf{B}$ is a basis of $\mathcal{L}$ and has a gap, then the $K$-DSP on $\mathbf{B}$ reduces to a lower dimension for all $K$
\end{lemma}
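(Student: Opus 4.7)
My plan is to exploit the orthogonal split that the gap at index $r$ provides, reducing $K$-DSP on $\mathbf{B}$ to $K'$-DSP instances on the two lower-dimensional lattices carved out of that split. Set $V_0 := \lspan(\mathbf{b}_1, \ldots, \mathbf{b}_r)$, $V_1 := V_0^{\perp}$, and let $\pi$ be the orthogonal projection onto $V_1$. Define $\mathcal{L}_{\textnormal{low}} := \mathcal{L} \cap V_0$ and $\mathcal{L}_{\textnormal{high}} := \pi(\mathcal{L})$. Linear independence of the basis forces $\mathcal{L}_{\textnormal{low}} = L(\mathbf{b}_1, \ldots, \mathbf{b}_r)$ (rank $r$) and $\mathcal{L}_{\textnormal{high}} = L(\pi(\mathbf{b}_{r+1}), \ldots, \pi(\mathbf{b}_N))$ (rank $N-r$); the gap guarantees $1 \leq r \leq N-1$, so both are strictly lower-dimensional.

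The technical heart of the argument is a covolume factorization. For any rank-$K$ sub-lattice $\hat{\mathcal{L}} \subseteq \mathcal{L}$, set $d := \textnormal{rank}(\pi(\hat{\mathcal{L}}))$ and $\hat{\mathcal{L}}_0 := \hat{\mathcal{L}} \cap V_0$, so $\textnormal{rank}(\hat{\mathcal{L}}_0) = K - d$ with $d \in [\max(0, K - r), \min(K, N - r)]$. I would establish the lattice-quotient identity
\begin{equation*}
    \textnormal{vol}(\hat{\mathcal{L}}) \;=\; \textnormal{vol}(\hat{\mathcal{L}}_0) \cdot \textnormal{vol}(\pi_{\perp}(\hat{\mathcal{L}})) \;\geq\; \textnormal{vol}(\hat{\mathcal{L}}_0) \cdot \textnormal{vol}(\pi(\hat{\mathcal{L}})),
\end{equation*}
where $\pi_{\perp}$ denotes the orthogonal projection onto $\lspan(\hat{\mathcal{L}}_0)^{\perp}$. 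For the equality, I extend a basis of $\hat{\mathcal{L}}_0$ to one of $\hat{\mathcal{L}}$ and compute the Gram determinant as a block matrix whose Schur complement cleanly isolates the $\hat{\mathcal{L}}_0$ factor from the lift. For the inequality, $\pi_{\perp}$ factors as $\pi$ followed by a further orthogonal projection, and such a projection cannot increase a lattice's covolume at equal rank. In particular, $\hat{\mathcal{L}}_0$ is a rank-$(K-d)$ sub-lattice of $\mathcal{L}_{\textnormal{low}}$ and $\pi(\hat{\mathcal{L}})$ is a rank-$d$ sub-lattice of $\mathcal{L}_{\textnormal{high}}$.

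Given this factorization, the reduction proceeds as follows: for each admissible $d$, recursively solve $(K-d)$-DSP on $\mathcal{L}_{\textnormal{low}}$ and $d$-DSP on $\mathcal{L}_{\textnormal{high}}$ to obtain candidates $\mathcal{M}_0, \mathcal{M}_1$, then lift a basis of $\mathcal{M}_1$ through $\pi$ to vectors in $\mathcal{L}$ (applying Lemma~\ref{lem:sub-lattice-hnf} to arrange everything in Hermite normal form and reducing the lift's $V_0$-residual by nearest-plane rounding against $\mathbf{B}$), adjoin it to a basis of $\mathcal{M}_0$, and return the rank-$K$ sub-lattice of $\mathcal{L}$ of smallest covolume over all such constructions. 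The step I expect to be the main obstacle is closing the gap between the factorization's lower bound and the covolume achieved by the lift: a naive lift introduces a $V_0$-residual that can inflate $\textnormal{vol}(\pi_{\perp}(\hat{\mathcal{L}}))$ above $\textnormal{vol}(\mathcal{M}_1)$. This is where the gap enters quantitatively --- because every $V_0$-direction Gram--Schmidt norm is strictly smaller than every $V_1$-direction one, nearest-plane reduction in $\mathcal{L}_{\textnormal{low}}$ drives the residual below the smallest $V_1$ scale, and the residual inflation can be absorbed into a polynomial-size enumeration over candidate alignments. Once the lift is controlled, the whole procedure is a polynomial-time reduction to $O(K)$ instances of $K'$-DSP on bases of dimension at most $\max(r, N - r) < N$, which realizes the desired reduction to lower dimension.
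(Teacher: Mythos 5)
Your covolume factorization $\textnormal{vol}(\hat{\mathcal{L}}) = \textnormal{vol}(\hat{\mathcal{L}}_0)\cdot\textnormal{vol}(\pi_{\perp}(\hat{\mathcal{L}})) \geq \textnormal{vol}(\hat{\mathcal{L}}_0)\cdot\textnormal{vol}(\pi(\hat{\mathcal{L}}))$ is correct, but the argument has a genuine gap exactly where you flag it: the synthesis step for intermediate values of $d$. The factorization only gives a lower bound, and the sub-problem ``find the minimal-covolume rank-$K$ sub-lattice whose projection onto $V_1$ has rank $d$'' does \emph{not} decompose into independent $(K-d)$-DSP and $d$-DSP instances plus a controlled lift. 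The covolume of the assembled lattice is $\textnormal{vol}(\mathcal{M}_0)\cdot\textnormal{vol}(\pi_{\perp}(\text{lift}))$, and the second factor depends on the $V_0$-residual of the lift relative to $\lspan(\mathcal{M}_0)$ (which is a proper subspace of $V_0$ when $K-d<r$). Your claim that nearest-plane reduction ``drives the residual below the smallest $V_1$ scale'' does not hold under Definition~\ref{definition_gap}, which only requires a strict inequality $\max(\|\mathbf{b}_1^*\|,\dots,\|\mathbf{b}_r^*\|) < \min(\|\mathbf{b}_{r+1}^*\|,\dots,\|\mathbf{b}_N^*\|)$: nearest-plane leaves a residual up to roughly $\tfrac{1}{2}\bigl(\sum_{i\leq r}\|\mathbf{b}_i^*\|^2\bigr)^{1/2}$, which for large $r$ and a marginal gap exceeds every $V_1$ scale, and the ``polynomial-size enumeration over candidate alignments'' is not substantiated. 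Moreover, even with a controlled residual, you have not shown that the optimum for a given $d$ is achieved by combining the two \emph{separately optimal} solutions $\mathcal{M}_0,\mathcal{M}_1$ --- the cross-interaction between the two factors is exactly what makes the general problem hard.

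The missing idea is that the gap eliminates the intermediate cases entirely, so no lifting or combination is ever needed. The paper proves, via an exchange argument, that the optimal sub-lattice's span $\mathbf{F}$ satisfies $\mathbf{F}\subseteq\mathbf{E}:=\lspan(\mathbf{b}_1,\dots,\mathbf{b}_p)$ when $K\leq p$, and $\mathbf{E}\subseteq\mathbf{F}$ (by duality) when $K>p$. Concretely: take a dual-HKZ-reduced basis $(\mathbf{c}_1,\dots,\mathbf{c}_K)$ of the optimal sub-lattice; if $\mathbf{F}\not\subseteq\mathbf{E}$, then Lemma~\ref{lem:sub-lattice-hnf} forces $\|\mathbf{c}_K^*\|>T:=\max_{i\leq p}\|\mathbf{b}_i^*\|$, while some $\mathbf{b}_j$ ($j\leq p$) outside $\lspan(\mathbf{c}_1,\dots,\mathbf{c}_{K-1})$ projects to length at most $T$; swapping it for $\mathbf{c}_K$ strictly decreases the covolume, a contradiction. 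This yields $d=0$ (restrict to $L(\mathbf{b}_1,\dots,\mathbf{b}_p)$ and solve $K$-DSP there) or $\hat{\mathcal{L}}_0=L(\mathbf{b}_1,\dots,\mathbf{b}_p)$ (project onto $\mathbf{E}^{\perp}$ and solve $(K-p)$-DSP); in the latter case $\lspan(\hat{\mathcal{L}}_0)=V_0$ exactly, so $\pi_{\perp}=\pi$, your factorization becomes an equality independent of the lift, and the residual problem vanishes. To repair your proof you would need to supply this exchange argument (or an equivalent structural fact); without it, the reduction does not go through.
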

\begin{proof}
Assume that the input basis has a gap at index $p$ and call $T=\max_{i\in[1,p]}{||\mathbf{b}_i^*||}$ and $\mathbf{E}=\lspan(\mathbf{b}_1,\dots,\mathbf{b}_p)$. Let $(\mathbf{c}_1,\dots,\mathbf{c}_K)$ be a dual-HKZ-reduced basis~\cite{korkine1877formes} of a sub-lattice with minimal covolume, $\mathbf{F}=\lspan(\mathbf{c}_1,\dots,\mathbf{c}_K)$, and let $\pi_{K-1}$ be the projection on the orthogonal of $\lspan(\mathbf{c}_1,\dots,\mathbf{c}_{K-1})$.

If $K\leq p$, we will prove that $\mathbf{F}\subseteq \mathbf{E}$. Suppose by contradiction that this inclusion does not hold.
In this case, $\|\mathbf{c}_K^{*}\|$ is maximal across all bases, so by Lemma~\ref{lem:sub-lattice-hnf}, it is $>T$. Also, at least one of the vectors $\mathbf{b}_1,\dots,\mathbf{b}_p$ is not in $\mathbf{F}$: it can either be a dimension argument if $p>K$, or due to the non-inclusion assumption when $p=K$. So at least one of the $\pi(\mathbf{b}_i)$ is non-zero. Let $j$ be the smallest of such index, we therefore have $\|\pi(\mathbf{b}_j)\|\leq \|\mathbf{b}_j^*\| \leq T$. If we replace $\mathbf{c}_K$ with $\mathbf{b}_j$ in the basis $\mathbf{C}$, we would obtain a sub-lattice of shorter covolume, which contradicts that $\mathbf{C}$ is a $K$-DSP solution. Therefore, $\mathbf{F}\subseteq \mathbf{E}$.
If $K=p$, because $\mathbf{E}$ and $\mathbf{F}$ have also the same dimension, they are equal, and $\mathbf{b}_1,\dots,\mathbf{b}_p$ is a solution of the $K$-DSP.
If $K<p$, we just proved that the $K$-DSP solution is a sub-lattice of $(\mathbf{b}_1,\dots,\mathbf{b}_p)$, so we reduce the problem to a smaller dimension.
If $K>p$, we prove by duality that $\mathbf{E}\subset \mathbf{F}$, so by projecting the input basis on the orthogonal of $\mathbf{E}$, we are reduced to solve the smaller $(K-p)$-DSP on the $(N-p)$-dimensional projected basis.
\end{proof}

\begin{lemma}[Relative-basis-size]\label{lem:relative-basis-size}
If $\mathbf{B}$ and $\mathbf{C}$ are two bases of $\mathcal{L}$, and $\mathbf{C}$ has a gap at index $K$, then $\textnormal{vol}(\pi_K(\mathbf{C}))^{1/(N-K)} \leq ||\mathbf{B}^*||$ where $\pi_K$ is the orthogonal projection over $(\mathbf{c}_1,\dots,\mathbf{c}_{K-1})^\perp$.
\end{lemma}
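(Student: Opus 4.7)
The plan is to apply Gram-Schmidt directly to the projected sequence $\pi_K(\mathbf{b}_1), \ldots, \pi_K(\mathbf{b}_N)$ and compare the resulting orthogonalization with that of $\mathbf{B}$. Let $\pi_K(\mathbf{b}_i)^{\star}$ denote the $i$-th term of this projected Gram-Schmidt, and let $S = \{i : \pi_K(\mathbf{b}_i)^{\star} \neq 0\}$. Then the $|S|$ vectors $\{\pi_K(\mathbf{b}_i)\}_{i \in S}$ generate a full-rank sub-lattice of $\pi_K(\mathcal{L})$ whose covolume equals $\prod_{i \in S} \|\pi_K(\mathbf{b}_i)^{\star}\|$ and is therefore at least $V := \textnormal{vol}(\pi_K(\mathbf{C}))$. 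Taking the geometric mean across the $|S|$ factors and bounding each one by $\|\mathbf{B}^{*}\|$ will then deliver the desired inequality.

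The core technical step is establishing $\|\pi_K(\mathbf{b}_i)^{\star}\| \leq \|\mathbf{b}_i^{*}\|$ for every $i$. I would expand $\mathbf{b}_i = \mathbf{b}_i^{*} + \sum_{j < i} \mu_{ij}\mathbf{b}_j^{*}$ using the Gram-Schmidt coefficients of $\mathbf{B}$ and apply $\pi_K$ linearly. Because $\lspan(\pi_K(\mathbf{b}_1), \ldots, \pi_K(\mathbf{b}_{i-1})) = \lspan(\pi_K(\mathbf{b}_1^{*}), \ldots, \pi_K(\mathbf{b}_{i-1}^{*}))$, the terms $\mu_{ij}\pi_K(\mathbf{b}_j^{*})$ are killed when we subtract the projection onto this span, and $\pi_K(\mathbf{b}_i)^{\star}$ emerges as the orthogonal component of $\pi_K(\mathbf{b}_i^{*})$ relative to that span. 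Since orthogonal projection cannot increase length, $\|\pi_K(\mathbf{b}_i)^{\star}\| \leq \|\pi_K(\mathbf{b}_i^{*})\| \leq \|\mathbf{b}_i^{*}\| \leq \|\mathbf{B}^{*}\|$.

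Putting the two pieces together, $\prod_{i \in S} \|\pi_K(\mathbf{b}_i)^{\star}\| \geq V$ combined with arithmetic-geometric mean yields $\max_{i \in S} \|\pi_K(\mathbf{b}_i)^{\star}\| \geq V^{1/|S|}$, and $|S|$ equals the rank of $\pi_K(\mathcal{L})$. Under the natural reading that makes the exponent in the statement tight, namely taking $\pi_K$ to project onto $\lspan(\mathbf{c}_1, \ldots, \mathbf{c}_K)^{\perp}$ so that this rank is exactly $N - K$, the bound on $\|\pi_K(\mathbf{b}_i)^{\star}\|$ from the previous paragraph completes the argument. The main obstacle is the projected-Gram-Schmidt inequality, which requires carefully relating two different orthogonalization procedures; the rest is AM-GM and the standard sub-lattice covolume comparison. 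I would note in passing that the gap hypothesis on $\mathbf{C}$ is not directly invoked here: the inequality holds for any two bases of $\mathcal{L}$, and the gap is what justifies invoking this lemma inside the preprocessing algorithm of Lemma~\ref{lem:gap-reduction}.
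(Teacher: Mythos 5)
Your argument is correct, and it diverges from the paper's proof in one substantive step. Both proofs share the same skeleton: project $\mathbf{B}$ through $\pi_K$, show the Gram--Schmidt norms do not increase under projection, and finish with geometric mean $\leq$ maximum. The difference is how each handles the fact that $\pi_K(\mathbf{B})$ is only a generating family of $\pi_K(\mathcal{L})$, not a basis. The paper runs LLL on the projected family to extract an honest basis $\mathbf{F}'$ with $\|\mathbf{F}'^*\|\leq\|\mathbf{B}^*\|$, so that the product of its Gram--Schmidt norms equals $\textnormal{vol}(\pi_K(\mathbf{C}))$ exactly; this leans on the (true but unproved there) fact that LLL does not increase the maximal Gram--Schmidt norm. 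You instead keep the linearly independent subfamily $\{\pi_K(\mathbf{b}_i)\}_{i\in S}$ of nonzero projected Gram--Schmidt vectors, observe it generates a full-rank sub-lattice of $\pi_K(\mathcal{L})$ whose covolume is an integer multiple of (hence at least) $\textnormal{vol}(\pi_K(\mathbf{C}))$, and conclude from $V\leq\prod_{i\in S}\|\pi_K(\mathbf{b}_i)^\star\|\leq\|\mathbf{B}^*\|^{|S|}$. Your route is more elementary and self-contained: it avoids LLL entirely, and it also supplies the projected-Gram--Schmidt inequality $\|\pi_K(\mathbf{b}_i)^\star\|\leq\|\mathbf{b}_i^*\|$ that the paper asserts without justification. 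Your two side remarks are also well taken: the exponent $1/(N-K)$ is consistent with projecting orthogonally to $\lspan(\mathbf{c}_1,\dots,\mathbf{c}_K)$ rather than $\lspan(\mathbf{c}_1,\dots,\mathbf{c}_{K-1})$ (the paper's own proof implicitly assumes the projected lattice has rank $N-K$), and the gap hypothesis plays no role in the inequality itself --- the paper's Corollary~\ref{cor:relative-lll-basis-size} already applies the lemma to gap-free bases, confirming this.
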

\begin{proof}
By definition, the family $\mathbf{F}=\pi_K(\mathbf{B})$ generates the same lattice as $\pi_K(\mathbf{C})$, and $\|\mathbf{F}^*\|\leq \|\mathbf{B}^*\|$. If we use the LLL algorithm on $\mathbf{F}$, we obtain a basis $\mathbf{F'}$ of $\pi_K(\mathbf{C})$ with $\|\mathbf{F'}^*\|\leq \|\mathbf{B}^*\|$. In particular, $\textnormal{vol}(\pi_K(\mathbf{C}))^{1/N-K}$ is the geometric mean of $\|\mathbf{F'}_i^*\|$, so $\textnormal{vol}(\pi_K(C))^{1/N-K}\leq \|\mathbf{F'}^*\|\leq \|\mathbf{B}^*\|$ 
\end{proof}

\begin{corollary}\label{cor:relative-lll-basis-size}
If $\mathbf{B}$ and $\mathbf{C}$ are two bases of $\mathcal{L}$ and $\mathbf{C}$ is LLL-reduced, then $||\mathbf{C}^*|| \leq (4/3 +\varepsilon)^{(N-1)/4}||\mathbf{B}^*||$.
\end{corollary}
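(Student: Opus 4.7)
The plan is to split into two cases depending on whether the LLL-reduced basis $\mathbf{C}$ has any gap in the sense of Definition~\ref{definition_gap}, and then combine Lemma~\ref{lem:gap-free-lll} with Lemma~\ref{lem:relative-basis-size}.

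First, if $\mathbf{C}$ is itself gap-free, I would apply Lemma~\ref{lem:gap-free-lll} directly to obtain $\|\mathbf{C}^*\| \leq (4/3+\varepsilon)^{(N-1)/4}\textnormal{vol}(\mathcal{L})^{1/N}$. The covolume $\textnormal{vol}(\mathcal{L})^{1/N}$ is the geometric mean of the Gram--Schmidt norms $\|\mathbf{b}_i^*\|$ of any basis of $\mathcal{L}$, in particular of $\mathbf{B}$, so it is at most the maximum $\|\mathbf{B}^*\|$. This already gives the claim in the gap-free case without invoking Lemma~\ref{lem:relative-basis-size} at all.

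If instead $\mathbf{C}$ has at least one gap, let $K$ be the largest gap index. By Definition~\ref{definition_gap} applied at $K$, every prefix norm $\|\mathbf{c}_i^*\|$ with $i\leq K$ is strictly less than every suffix norm $\|\mathbf{c}_j^*\|$ with $j>K$, so the maximum $\|\mathbf{C}^*\|$ is attained in the suffix. Let $\pi_K$ denote the orthogonal projection onto $\textnormal{span}(\mathbf{c}_1,\dots,\mathbf{c}_K)^\perp$. Then $\pi_K(\mathbf{c}_{K+1}),\dots,\pi_K(\mathbf{c}_N)$ is a basis of the projected lattice $\pi_K(\mathcal{L})$ of dimension $N-K$, it is LLL-reduced (the Lovász and size-reduction conditions on indices $>K$ are inherited from $\mathbf{C}$), and it is gap-free since $K$ was chosen as the last gap. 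Applying Lemma~\ref{lem:gap-free-lll} to this projected basis yields
\begin{equation*}
\max_{i>K}\|\mathbf{c}_i^*\| \;\leq\; (4/3+\varepsilon)^{(N-K-1)/4}\,\textnormal{vol}(\pi_K(\mathcal{L}))^{1/(N-K)},
\end{equation*}
and then Lemma~\ref{lem:relative-basis-size} bounds the covolume factor by $\|\mathbf{B}^*\|$. Since $K\geq 1$, the resulting exponent $(N-K-1)/4$ is at most $(N-1)/4$, so the corollary follows.

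The only technical point that needs care is observing that the Gram--Schmidt norms of the projected suffix coincide with $\|\mathbf{c}_{K+1}^*\|,\dots,\|\mathbf{c}_N^*\|$ and that LLL-reducedness is preserved under the projection; both are standard, because for $i>K$ the vector $\mathbf{c}_i^*$ is already orthogonal to $\mathbf{c}_1,\dots,\mathbf{c}_K$. With this in hand, the rest of the argument is a one-line combination of the two preceding lemmas. I do not expect any real obstacle beyond selecting the gap index correctly so that the suffix is gap-free, which is exactly why I pick $K$ to be the \emph{last} gap rather than the first.
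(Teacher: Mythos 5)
Your proof is correct and follows essentially the same route as the paper's (one-sentence) proof: apply Lemma~\ref{lem:relative-basis-size} at the highest gap of $\mathbf{C}$, or to the whole basis when $\mathbf{C}$ is gap-free, combined with Lemma~\ref{lem:gap-free-lll} on the (projected) gap-free block. Your version simply makes explicit the details the paper leaves implicit — that the maximum of $\|\mathbf{C}^*\|$ lies in the suffix past the last gap, that the projected suffix remains LLL-reduced and gap-free, and that $\textnormal{vol}(\mathcal{L})^{1/N}\leq\|\mathbf{B}^*\|$ as a geometric mean — so no substantive difference or gap.
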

\begin{proof}
This is a consequence of the previous Lemma~\ref{lem:relative-basis-size} when applied to the highest gap of $\mathbf{C}$ if $\mathbf{C}$ has a gap, or the whole basis if $\mathbf{C}$ is gap-free. 
\end{proof}

In the proof of Theorem~\ref{theorem:spatial}, a key insight is that there is an output basis that is a solution to $K$-DSP that is LLL-reduced, due to the invariance of $K$-DSP solutions under LLL-iterations. Furthermore, irrespective of whether the original input basis $\mathbf{B}_{in}$ is gap-free, the output $\mathbf{B}_{P}$ of the preprocessing is always LLL-reduced and gap-free. Therefore, both the input basis $\mathbf{B}_{P}$ and the output basis $\mathbf{B}_{out}$ of the quantum algorithm can be constrained to be LLL-reduced without loss of generality. This structure, in concert with the quality assurance of the input basis $\mathbf{B}_{P}$ from Lemmas~\ref{lem:gap-free-lll} and~\ref{lem:relative-basis-size}, allows us to bound the qubits.

\begin{lemma}{\textnormal{(Unimodular-transformation-bound)}}\label{lemma:unimodular_bound}
If $\mathbf{B}_P$ is a gap-free LLL basis of $\mathcal{L}$ and $\mathbf{C}$ is an LLL-reduced basis of the same lattice, then the unimodular transformation $\mathbf{U}$ s.t. $\mathbf{U}\mathbf{B}_P=\mathbf{C}$ satisfies $\|\mathbf{U}\|_\infty \leq N(4/3 +\varepsilon)^{3(N-1)/4}.$
\end{lemma}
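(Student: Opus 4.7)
The plan is to bound each entry $|u_{ij}|$ by working in the Gram–Schmidt frame of $\mathbf{B}_P$, and then combining Corollary~\ref{cor:relative-lll-basis-size} (relating $\|\mathbf{C}^*\|$ and $\|\mathbf{B}_P^*\|$) with the gap-free guarantee of Lemma~\ref{lem:gap-free-lll} (which controls the spread of the GSO norms of $\mathbf{B}_P$). The exponent $3(N-1)/4$ in the target bound is suggestive of the decomposition $1/4 + 1/2$, pointing directly to this two-ingredient strategy.

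First I would decompose $\mathbf{b}_j = \sum_{k\leq j}\mu_{jk}\mathbf{b}_k^*$ with $\mu_{jj}=1$ and $|\mu_{jk}|\leq 1/2$ (LLL size-reduction of $\mathbf{B}_P$), and denote $\beta_k = \|\mathbf{b}_k^*\|$. Writing $\mathbf{c}_i=\sum_j u_{ij}\mathbf{b}_j$ and expanding in the orthonormal frame $\hat{\mathbf{b}}_k^* = \mathbf{b}_k^*/\beta_k$, the component of $\mathbf{c}_i$ along $\hat{\mathbf{b}}_k^*$ equals $\beta_k \alpha_{ik}$ with $\alpha_{ik} = \sum_{j\geq k} u_{ij}\mu_{jk}$. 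By Parseval in this orthonormal basis, $\sum_k \beta_k^2 \alpha_{ik}^2 = \|\mathbf{c}_i\|^2$, and in particular $|\alpha_{ik}| \leq \|\mathbf{c}_i\|/\beta_k$. The integer coefficients are then recovered by the triangular back-substitution $u_{ij} = \alpha_{ij} - \sum_{k>j} u_{ik}\mu_{kj}$.

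Next I would bound the numerator and denominator of $|\alpha_{ik}|$ separately. For $\|\mathbf{c}_i\|$: size-reduction of $\mathbf{C}$ gives $\|\mathbf{c}_i\|^2 \leq (1+(i-1)/4)\,\|\mathbf{C}^*\|^2 \leq N\|\mathbf{C}^*\|^2$, and Corollary~\ref{cor:relative-lll-basis-size} applied to $\mathbf{C}$ and $\mathbf{B}_P$ yields $\|\mathbf{C}^*\| \leq (4/3+\varepsilon)^{(N-1)/4}\|\mathbf{B}_P^*\|$. For $\beta_k$: the gap-free LLL property of $\mathbf{B}_P$, via Lemma~\ref{lem:gap-free-lll} applied to both the norms $\beta_k$ and their reciprocals (equivalently to the reversed dual basis), forces $\beta_{\max}/\beta_{\min} \leq (4/3+\varepsilon)^{(N-1)/2}$. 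Multiplying these two estimates produces the factor $(4/3+\varepsilon)^{(N-1)/4+(N-1)/2} = (4/3+\varepsilon)^{3(N-1)/4}$ with only a polynomial prefactor $\sqrt{N}$.

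The main obstacle is the final back-substitution. A naive bound using $|\mu_{kj}|\leq 1/2$ on each step would yield entries of $\mu^{-1}$ of order $(3/2)^{N-j}$, giving an unacceptable exponential amplification. The key to keeping the amplification polynomial is to not treat the $\alpha_{ik}$ as independently saturating their worst-case size: the Parseval identity couples them through $\sum_k \beta_k^2 \alpha_{ik}^2 = \|\mathbf{c}_i\|^2$, and the gap-free flatness of the $\beta_k$ ensures this $\ell^2$ budget distributes evenly rather than concentrating in a single coordinate. Combining this with an $\ell^2$ (rather than entrywise) treatment of the triangular inversion reduces the amplification to the dimension factor $N$ appearing in the statement, yielding $\|\mathbf{U}\|_\infty \leq N(4/3+\varepsilon)^{3(N-1)/4}$.
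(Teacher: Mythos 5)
Your accounting of the $(4/3+\varepsilon)$ factors is exactly the paper's: one factor $(N-1)/4$ from Corollary~\ref{cor:relative-lll-basis-size} relating $\|\mathbf{C}^*\|$ to $\|\mathbf{B}_P^*\|$, and $(N-1)/2$ from the gap-free flatness of the GSO norms of $\mathbf{B}_P$ (equivalently, from applying Lemma~\ref{lem:gap-free-lll} to $\mathbf{B}_P$ and its reversed dual). Your bound $|\alpha_{ik}|\leq \sqrt{N}(4/3+\varepsilon)^{3(N-1)/4}$ on the Gram--Schmidt coordinates is correct. The gap is in the last step, and the fix you sketch does not close it. Writing $\mathbf{u}_i = \boldsymbol{\alpha}_i M^{-1}$ with $M$ the unit lower-triangular $\mu$-matrix of $\mathbf{B}_P$, an $\ell^2$ treatment still pays the spectral norm of $M^{-1}$, which is genuinely exponential for valid size-reduced inputs: take $\mu_{jk}=-1/2$ for all $j>k$ and $\|\mathbf{b}_k^*\|=1$ for all $k$ (a perfectly flat, hence gap-free, LLL-reduced basis); then $(M^{-1})_{jk}$ grows like $(3/2)^{j-k}$. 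Parseval only bounds $\|\boldsymbol{\alpha}_i\|_2$; it does not prevent $\boldsymbol{\alpha}_i$ from aligning with the top singular direction of $M^{-1}$, and indeed in this example the integer vector $u_j\approx (3/2)^{N-j}$ produces a lattice vector of length only $O(\sqrt{N})$ with all GSO coordinates $\approx 1$. So "flat $\beta_k$ plus $\ell^2$ budget" cannot by itself tame the triangular inversion; some further structural input is required.

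The paper's proof supplies that input through duality rather than back-substitution: it writes $u_{ij}=\langle\mathbf{c}_i,\mathbf{d}_j\rangle$, where $\mathbf{d}_j$ is the $j$-th row of $\mathbf{B}_P^{-t}$ (the dual basis vector), applies Cauchy--Schwarz, and controls $\|\mathbf{d}_j\|$ by invoking the fact that the reversed dual of a gap-free LLL-reduced basis is itself a gap-free LLL-reduced basis, so that Lemma~\ref{lem:gap-free-lll} applies to $\mathbf{B}_P^{-t}$ directly ($\|\mathbf{B}_P^{*-t}\|\leq(4/3+\varepsilon)^{(N-1)/4}\mathrm{vol}(\mathcal{L})^{-1/N}$), and the Gram--Schmidt bound converts to a spectral-norm bound on $\mathbf{B}_P^{-1}$. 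You use the dual only to control the reciprocal GSO norms $\beta_k^{-1}$; the paper uses it to control the dual basis \emph{vectors} themselves, which is a strictly stronger statement and is exactly what replaces the ill-conditioned inversion of $M$. To repair your argument you would need to import that step (or an equivalent one), at which point you have essentially reconstructed the paper's proof.
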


\begin{proof}
Since $\mathbf{B}_P$ is an LLL-reduced gap-free basis, so is its (reversed) dual $\mathbf{B}_P^{-t}$. Therefore, by Lemma~\ref{lem:gap-free-lll}, we have $\|\mathbf{B}_P^*\| \leq (4/3 +\varepsilon)^{(N-1)/4}\textnormal{vol}(\mathcal{L})$ and $\|\mathbf{B}_P^{*-t}\| \leq (4/3 +\varepsilon)^{(N-1)/4}\textnormal{vol}(\mathcal{L})^{-1}$. By Corollary~\ref{cor:relative-lll-basis-size}, $\|\mathbf{C}^*\|\leq (4/3 +\varepsilon)^{(N-1)/4}\|\mathbf{B}_P^*\|$. We prove the Lemma by converting the Gram-Schmidt norm to the spectral norm and multiplying them together.
\end{proof}
The previous Lemma~\ref{lemma:unimodular_bound} allows us to bound the size of the unimodular transformation $\mathbf{U}$ from the input basis to the output basis. The entries of the matrix $\mathbf{U}$ consist of qubits, which are the target coefficients we are seeking. Therefore, determining an upper limit on these entries will enable us to bound the total number of qubits required for the quantum solver.

\begin{theorem}\textnormal{(Bound on the number of qubits).}\label{theorem:spatial}
Let $N$-dimensional lattice $\mathcal{L}$ be the input to the quantum algorithm for the $K$-DSP as the span of a gap-free LLL-reduced basis $\mathbf{B}_P= (\mathbf{b}_1,..., \mathbf{b}_N)$. Then, a $KN^2$ qubit Hilbert search space is sufficient to ensure that at least one exact solution of the $K$-DSP is contained. 
\end{theorem}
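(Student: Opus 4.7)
The plan is to exhibit a $K$-DSP solution whose coordinates in the preprocessed basis $\mathbf{B}_P$ are small enough to be encoded by $O(KN^2)$ qubits. The key leverage is the combination of two facts already developed in the excerpt: (i) there is always a densest sub-lattice admitting a basis that can be completed to an LLL-reduced basis of the full lattice $\mathcal{L}$, and (ii) by Lemma~\ref{lemma:unimodular_bound} the unimodular transformation between two LLL-reduced bases of $\mathcal{L}$ has entries of size only $N(4/3+\varepsilon)^{3(N-1)/4}$.

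First I would argue that $K$-DSP has a solution $(\mathbf{v}_1,\dots,\mathbf{v}_K)$ that extends to an LLL-reduced basis $\mathbf{C}=(\mathbf{v}_1,\dots,\mathbf{v}_N)$ of $\mathcal{L}$, using the invariance of densest sub-lattices under LLL swaps as discussed just before the theorem. Expressing this basis in the preprocessed basis gives a unique unimodular matrix $\mathbf{U}\in\mathrm{GL}_N(\mathbb{Z})$ with $\mathbf{C}=\mathbf{U}\mathbf{B}_P$, and its first $K$ rows record exactly the coefficients $\mathbf{v}_i=\sum_j \mathbf{U}_{ij}\mathbf{b}_j$ of the solution. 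By Lemma~\ref{lemma:unimodular_bound}, $\|\mathbf{U}\|_\infty \leq N(4/3+\varepsilon)^{3(N-1)/4}$.

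Next I would count bits. Each entry of $\mathbf{U}$ is an integer bounded in absolute value by $N(4/3+\varepsilon)^{3(N-1)/4}$, whose logarithm is $O(N)$. Using the signed binary qudit encoding of Eq.~(\ref{eq:mapping2}) with $m=O(N)$, each such integer is represented by $O(N)$ qubits and spans the range $[-2^m,2^m-1]$, which contains the required coefficient. Multiplying by $K$ solution vectors and $N$ coordinates each, the total qubit count is $K\cdot N\cdot O(N)=O(KN^2)$, and by construction the target solution lies in this Hilbert space.

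The main obstacle is the LLL-extendability claim: given a basis of the densest sub-lattice, it is not obvious it can be placed as the first $K$ vectors of an LLL-reduced basis of $\mathcal{L}$. The cleanest route, I believe, is to start from any completion of such a basis, use Lemma~\ref{lem:sub-lattice-hnf} to put the sub-lattice coordinates in Hermite normal form so that the first $K$ Gram--Schmidt positions describe the densest sub-lattice, and then invoke the fact that LLL size-reductions and the Lov\'asz swap condition do not mix the first $K$ vectors with the remaining ones once the prefix already has optimal covolume (any cross-swap would yield a strictly denser sub-lattice, contradicting optimality). Once this structural claim is in place, the bound in Lemma~\ref{lemma:unimodular_bound} directly closes the counting argument.
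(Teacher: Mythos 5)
Your proposal is correct and follows essentially the same route as the paper: both rest on the existence of an LLL-reduced solution basis, the coefficient bound $N(4/3+\varepsilon)^{3(N-1)/4}$ from Lemma~\ref{lemma:unimodular_bound}, and the count $K\cdot N\cdot O(N)=O(KN^2)$ qubits. The only (harmless) divergence is that you complete the densest sub-lattice basis to a full LLL-reduced basis of $\mathcal{L}$ so that Lemma~\ref{lemma:unimodular_bound} applies verbatim to an $N\times N$ unimodular matrix --- your swap-at-position-$K$ argument for why LLL preserves the optimal prefix is sound and is precisely the ``invariance under LLL'' insight the paper invokes --- whereas the paper's proof instead LLL-reduces the $K$-dimensional solution basis itself and bounds the resulting $K\times N$ coefficient matrix through spectral norms.
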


\begin{proof} The worst case in terms of the number of qubits will occur when the input basis has no gaps since the dimension of the problem cannot be reduced. Let the input of the quantum solver be a gap-free LLL-reduced basis of dimension $N$. If we LLL-reduce a basis that is a solution of the $K$-DSP, it will still be a solution to the problem. Therefore, there exists an LLL-reduced basis that is the solution of the $K$-DSP. 

A solution output basis can be denoted as $\mathbf{{B}}_{out}$ as
\begin{equation}
  \begin{bmatrix}
   \mathbf{v}_{1} \\
   \vdots \\
   \mathbf{v}_{K}
 \end{bmatrix}
 = 
  \begin{bmatrix}
   x_{11} & x_{12} & \hdots & x_{1N}\\           
    x_{21} & x_{22} & \hdots & x_{2N} \\
    \vdots & \vdots & \ddots & \vdots\\
    x_{K1} & x_{K2} & \hdots & x_{KN} 
  \end{bmatrix}
  \begin{bmatrix}
   \mathbf{b}_{1} \\
   \vdots \\
   \mathbf{b}_{N}
 \end{bmatrix}.
 \label{eq: relation}
\end{equation}
The linear system in Eq.~(\ref{eq: relation}) can also be expressed in terms of matrices as $\mathbf{{B}}_{out} = \mathbf{X}\mathbf{B}_{P} $, where $\mathbf{{B}}_{out}$ and $\mathbf{X}$ are $K \times N$ matrices, and $\mathbf{B}_{P}$ is an $N\times N$ matrix, considering the worst-case assumption. Then, we have the inequality relating the infinity norm and the spectral matrix norm 
\begin{equation}
    ||\mathbf{X}||_{\infty} \leq ||\mathbf{{B}}_{out}||_{sp}||\mathbf{B}_{P}^{-1}||_{sp},
\end{equation}
Hence, using the bound on the basis $\mathbf{B}_{P}$ and $\mathbf{{B}}_{out}$ that do not have a gap, and the inequality obtained in Lemma~\ref{lemma:unimodular_bound}, we can write  
\begin{eqnarray}
    ||\mathbf{X}||_{\infty} \leq N\left (\frac{4}{3} + \varepsilon \right)^{3(N-1)/4}.
    \label{eq:bounds_coeff}
\end{eqnarray}

The coefficients that describe the vectors that span the densest sub-lattice must be bounded by $-2^{m} \leq x_i \leq 2^{m} -1$ for all $i =1,..., N$, where $m$ are the qubits that suffice to represent the coefficients $x_i$. 
Consequently, the total number of qubits can be expressed as 
\begin{equation}
    n = K \sum_{i = 1}^N m =K \sum_{i = 1}^N \lfloor \log{2^{m}} \rfloor \leq K\log{\prod_{i=1}^N 2^{m}}.
    \label{eq:total_n}
\end{equation}
Using Eq.~(\ref{eq:bounds_coeff}) we can write
\begin{align}
    \prod_{i=1}^N 2^{m} \leq \prod_{i=1}^N  N\left (\frac{4}{3} + \varepsilon \right)^{3(N-1)/4} \\ = N^N\left (\frac{4}{3} + \varepsilon \right)^{3(N^2-N)/4}.
    \label{eq:prod_ineq}
\end{align}
Taking the logarithm of Eq.~(\ref{eq:prod_ineq}) and substituting into Eq.~(\ref{eq:total_n}), we obtain that $\frac{3KN^2}{4}\log{\left(\frac{4}{3} + \varepsilon \right)}-\frac{3KN}{4}\log{\left(\frac{4}{3} + \varepsilon \right)} + N \log{N}$ qubits suffices to find the solution of the $K$-DSP using the quantum solver. 
\end{proof}
We can calculate the run-time in the context of Groverization and consider LLL preprocessing. However, the bound obtained can be reduced by allowing more substantial reductions in the input basis. Therefore, we present the Grover speedup for the latter case.

The pseudocode of the full algorithm for the $K$-DSP is provided in Algorithm 1, which incorporates the preprocessing step, involving LLL reduction, and the QAOA steps. 

\begin{algorithm}[H]
\caption{A quantum algorithm for the $K$-DSP\label{alg:algorithm}}
\begin{algorithmic}[1]
\Require $\mathbf{B}_{in}\subseteq \mathbb{R}^N$, $K \colon 0<K< \textnormal{dim}(\mathbf{B}_{in})$
\Ensure  $\mathbf{B}_{out} \subseteq \mathbb{R}^K$ a linear subspace.
\State $\mathbf{B}_{L} \gets$ Run LLL on $\mathbf{B}_{in}$
\If{$\mathbf{B}_{L}$ is gap-free}
\State  $\mathbf{B}_{P} \gets \mathbf{B}_{L}$
\ElsIf{$\mathbf{B}_{L}$ has gaps}
\State By Lemma~\ref{lem:gap-reduction}, reduce the dimension of the $K$-DSP 
\State  $\mathbf{B}_{P} \gets \mathbf{B}_{L} \subseteq \mathbb{R}^{(N-p)}$
\EndIf

\State $n \gets KN^2$
\State $\mathbf{G} \gets \mathbf{B}_P\mathbf{B}_P^T$  

\Procedure{MakeQaoa}{$ H_M,H_{DSP}, \boldsymbol{\theta}\colon \mathrm{array}, p \colon \mathrm{int}$}
\State $|\psi\rangle \gets H^{\otimes n}|0 \rangle^{\otimes n}$
\For{$t = 1$ to $p$}
    \State $|\psi\rangle \gets H_M H_{DSP} |\psi\rangle$
\EndFor
\State \textbf{return} $|\psi\rangle$
\EndProcedure

\Procedure{TrainQaoa}{$|\psi\rangle\colon \mathrm{array}$}
\State $\textnormal{old} = 0$
\While{True}
\State $|\psi\rangle \gets\Call{MakeQaoa}{ \boldsymbol{\theta}}$
\State $\boldsymbol{\theta}_{opt} \gets $ Compute gradient of $C(\boldsymbol{\theta})$ 
\State $\textnormal{error} \gets \langle\psi| C(\boldsymbol{\theta}_{opt})|\psi\rangle$
\If{$|\textnormal{error} - \textnormal{old}| < \textnormal{tol}$}
\State \textbf{break} 
\EndIf
\State $ \textnormal{old} = \textnormal{error}$
\EndWhile
\State \textbf{return} $\boldsymbol{\theta}_{opt}$
\EndProcedure
\State $\boldsymbol{\theta}_{opt}\gets\Call{TrainQaoa}{|\psi\rangle}$
\State $|\psi\rangle \gets\Call{MakeQaoa}{ H_M,H_{DSP},\boldsymbol{\theta}_{opt},p}$
\State $\mathbf{B}_{out} \gets$ Measure $|\psi\rangle$ in the $Z$ basis and post-processing
\State \textbf{return} $\mathbf{B}_{out}$
\end{algorithmic}
\end{algorithm}

\subsubsection{Preprocessing with an SVP oracle}
If we are allowed to use an SVP oracle, we can detect if there exists a basis with a gap in the lattice: it suffices to run dual-HKZ followed by HKZ~\cite{korkine1877formes}.  These alternative algorithms enable more powerful reductions of the input basis, although they may require certain assumptions. Then, only the gap-free case subsists in our Theorem, and it is possible to decrease the $(4/3 +\varepsilon)^{(N-1)/4}$ term that arises from the LLL-bound in Theorem~\ref{theorem:spatial}. In particular, the number of qubits becomes $O(5KN\log{N})$, which is an alternative construction of Dadush-Micciancio Lemma (see App.~\ref{app:appendixB} for the proof).

\begin{theorem}[Runtime of Groverized Exhaustive Search]
Let $N^{5KN}$ be the size of the search space, and let $M$ be the number of solutions in the space. Then, the runtime of Groverized exhaustive search for finding a solution of the $K$-DSP is 
\begin{equation*}
 O\left(\frac{N^{5KN/2}}{\sqrt{M}}\right).   
\end{equation*}
\label{theorem:kdsp-groverization}
\end{theorem}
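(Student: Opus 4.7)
The plan is to reduce the statement to the standard analysis of Grover's algorithm applied to an appropriately defined search problem. From the discussion preceding the theorem, invoking an SVP oracle during preprocessing lets us replace the LLL-based bound from Theorem~\ref{theorem:spatial} with the HKZ/dual-HKZ-based bound, yielding a register of $O(5KN\log N)$ qubits. Exponentiating, the associated computational basis has $2^{O(5KN\log N)} = N^{5KN}$ states, and by construction this basis contains at least one encoding of every $K$-DSP solution of interest.

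First I would set up the search problem precisely. Each computational basis state indexes a $K\times N$ integer matrix $\mathbf{X}$, which via $\mathbf{X}\mathbf{B}_P$ determines a candidate $K$-tuple of lattice vectors. A classical subroutine can check, in time polynomial in $N$ and the bit-length of the coefficients, whether the $K$ candidate rows are $\mathbb{Z}$-linearly independent and compute $\textnormal{vol}(\mathcal{L}(\mathbf{X}\mathbf{B}_P))^2$ as the Gramian determinant given by the Leibniz formula from the previous subsection. Thus the predicate ``$\mathbf{X}$ encodes a solution of minimum covolume'' is computable by a polynomial-size reversible circuit $V$ once the optimal value $\textnormal{vol}^{\ast}$ is known, giving a unitary phase oracle $O_{V}\colon |\mathbf{X}\rangle \mapsto (-1)^{V(\mathbf{X})}|\mathbf{X}\rangle$.

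Next I would apply Grover's algorithm. Prepare the uniform superposition on $N^{5KN}$ states via Hadamards, and iterate the standard Grover operator $(2|\psi_0\rangle\langle\psi_0|-I)O_{V}$ for $\Theta\bigl(\sqrt{N^{5KN}/M}\bigr)$ rounds, where $M$ is the number of marked states. By the textbook analysis (Boyer--Brassard--Hoyer--Tapp), measurement returns a marked $\mathbf{X}$ with constant probability, giving the claimed $O(N^{5KN/2}/\sqrt{M})$ query complexity; since each oracle call costs $\mathrm{poly}(N,K)$, this is also the stated runtime up to polynomial factors.

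The main obstacle is that the predicate $V$ depends on the a priori unknown quantity $\textnormal{vol}^{\ast}$ (and, implicitly, on $M$). I would circumvent this by appealing to Grover minimum finding (Dürr--Høyer), which finds $\arg\min_{\mathbf{X}} f(\mathbf{X})$ over a domain of size $S$ in $O(\sqrt{S})$ oracle calls, with $f(\mathbf{X})=\textnormal{vol}(\mathcal{L}(\mathbf{X}\mathbf{B}_P))^2$ (setting $f=+\infty$ on non-independent tuples). This recovers the $N^{5KN/2}$ scaling when $M=O(1)$; for the general $M$-dependence, one composes minimum finding with a standard counting step or applies Grover-with-unknown-$M$ once the minimum has been identified, preserving the bound $O(N^{5KN/2}/\sqrt{M})$ in the theorem. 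The remaining verification is the bit-length and polynomial-cost bookkeeping on $V$, which follows routinely from the $O(KN\log N)$ qubit count.
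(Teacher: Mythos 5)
Your proposal is correct and follows essentially the same route as the paper: use the SVP-oracle/HKZ preprocessing to bound the register at $O(5KN\log N)$ qubits, hence a search space of $2^{5KN\log N}=N^{5KN}$ basis states containing a solution, and then invoke Grover search with $M$ marked states to obtain $O\left(N^{5KN/2}/\sqrt{M}\right)$ queries. You go slightly further than the paper, which simply cites Grover for unstructured search without specifying the marking predicate; your observation that the optimal covolume is unknown a priori and your fix via D\"urr--H\o yer minimum finding is a useful refinement of, not a departure from, the paper's argument.
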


\begin{proof}
    Consider the scenario where more powerful reduction algorithms are permitted in the classical preprocessing step. Then, the number of qubits required to ensure that at least one solution of the $K$-DSP is found within the search space is $O(5KN\log{N})$, where $N$ and $K$ are the ambient lattice and the densest sub-lattice dimensions, respectively. Consequently, the database consists of $2^{5KN\log{N}}$ elements, which can also be expressed as $N^{5KN}$. Drawing upon Grover's algorithm for unstructured search~\cite{grover1996fast}, only $O\left(N^{5KN/2}M^{-1/2}\right)$ queries are necessary to maximize the probability of obtaining the target state, where $M$ is the number of solutions.
\end{proof}

In Theorem~\ref{theorem:kdsp-groverization}, we establish that the runtime for Groverizing the exhaustive search is bounded by $O\left(N^{5KN/2}M^{-1/2}\right)$. For the best-known classical algorithm, the runtime is $O(K^{KN})$. When comparing both, we observe that the runtime for the classical algorithm scales as $2^{KN\log{K}}$, while the quantum algorithm has slightly worse performance due to an additional logarithmic term in the exponent, resulting in $2^{5KN\log{N}/2-1/2\log{M}}$. The observation that our quantum solver appears moderately slower than the classical one could be attributed to certain aspects of the proof. It is worth noting that, in certain scenarios, the Half-Volume Problem is suspected to be easier than the Shortest Vector Problem (SVP), particularly in the context of overstretched NTRU lattices~\cite{hoffstein1998ntru,fouque2018falcon}. Therefore, one possible explanation for the presence of this extra term may be linked to the slower preprocessing associated with the utilization of the SVP oracle.

A similar run-time guarantee as in Theorem~\ref{theorem:kdsp-groverization} is obtained when the first excited states of the Hamiltonian are found using quantum Gibbs sampling~\cite{chifang2023quantum}. 

\subsection{Ground state penalization}\label{subsec: gs_penalization}

As has been previously mentioned, the Hilbert space of the problem contains trivial solutions such as the zero vector or sub-lattices spanned by linearly dependent vectors. In this section, we propose a method to penalize the energy of the trivial solutions, although many other approaches could be considered~\cite{Higgott_2019}. The easiest way is to merely project into the sub-space of non-trivial solutions since the sub-space of trivial solutions has a simple algebraic characterization of having determinant zero. But this destroys the locality of the Hamiltonian. There is an incentive to keep the locality, either for ease of implementation or for run-time guarantees such as the one offered by quantum Gibbs sampling~\cite{chifang2023quantum,chifang2023sparse}.  

Let the problem Hamiltonian for the 2-Densest Sub-lattice Problem be the one presented in Eq.~(\ref{eq:ham_pen}). This Hamiltonian is a sum of up to 4-body hermitian matrices. Its eigenvectors are the different sub-lattices that can be generated with the available qubits of the problem, and the eigenvalues are the squared covolumes of respective sub-lattices. Thus, the ground-state space of $H_{DSP}$, which corresponds to the eigenvalue 0, is composed of trivial solutions such as the sub-lattice spanned by zero vectors and sub-lattices spanned by $K$-linearly dependent vectors.

The main goal is to find the first excited state of $H_{DSP}$. Therefore, we need to penalize somehow the ground state of the Hamiltonian. To achieve this, we can add an extra term to the Hamiltonian, similarly to the Projection Lemma in~\cite{kempe2005complexity}. In this context, we can write 
\begin{equation}
 H = H_{DSP} + r e^{-s H_{DSP}}, 
 \label{eq:ham_pen_2}
\end{equation}
where $r e^{-s H_{DSP}}$ corresponds to the penalizing term, and $r$ and $s$ are positive constants to be determined.

The extra term has vanishingly small eigenvalues for excited configurations, while its eigenvalue is equal to $1$ when considering trivial solutions. Therefore, the eigenvalues of the total Hamiltonian $H$ for the excited state space of $H_{DSP}$ are nearly the same as the eigenvalues of $H_{DSP}$. However, the eigenvalues of $H$ for the ground-state space of $H_{DSP}$ are now $r$. Since $r$ and $s$ are parameters to be tuned, we can associate them with a value such that the energy of the first excited state of $H_{DSP}$ is lower than the energy of trivial configurations. 

It is important to note several things here. 
Firstly, since the problem Hamiltonian is Hermitian, it is also the case for the extra term and thus, for the total Hamiltonian. 
Secondly, both terms of $H$ share the same eigenvectors, so $H$ has the same eigenvectors as the $H_{DSP}$. Finally, the addition of the second term implies that the Hamiltonian in Eq.~(\ref{eq:ham_pen_2}) does not need to be 4-body. 

The Hamiltonian proposed in Eq.~(\ref{eq:ham_pen_2}) has such a complex shape that cannot be implemented using QAOA. Therefore, for the purpose of simulation, a second-degree approximation could be considered such that 
\begin{equation}
    H \approx r \mathds{1} + (1-r s) H_{DSP} + r \frac{s^2}{2}  H_{DSP} H_{DSP}.
\end{equation}
This approximation is equivalent to considering $ H = (H_{DSP} - E)^2$ with some $E$ that depends on $r$ and $s$, together with an overall shift and re-scaling. Thus, any of the formulations is feasible for the QAOA emulation.

\subsection{Spectral gap bounding}

The optimal values for $r$ and $s$, or an approximate value for $E$ need to be determined to achieve such penalization. A tight lower bound on the spectral gap would be enough to either tune the parameters or specify $E$. However, proving lower bounds on the spectral gap is considered a hard problem~\cite{gottesman2010quantum,ambainis2014physical}. To solve this issue, we can devise our algorithm with a promised spectral gap $\epsilon_{DSP}$ as a parameter. The true spectral gap can be readily estimated using binary search, by using the parameterized algorithm as an oracle. 
Here, the first excited state energy, which is equivalent to the covolume squared of the densest sub-lattice, corresponds to the spectral gap. Therefore, the upper limit of the binary search algorithm can be set by bounding the Gramian formed by the set of vectors that span the densest sub-lattice.

\begin{lemma}\textnormal{(Upper bound on the spectral gap of $H_{DSP}$).}
Let the input of the algorithm be an $N$-dimensional LLL-reduced basis that defines a lattice $\mathcal{L}$. Then, the spectral gap of the problem Hamiltonian $H_{DSP}$ is bounded by $\Delta E_{DSP} \leq \left(\frac{4}{3} + \varepsilon \right )^{K(N-1)} \textnormal{vol}(\mathcal{L})^{2K}$.    \label{lemma:spectralgap} 
\end{lemma}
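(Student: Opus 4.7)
The first excited state of $H_{DSP}$ encodes the $K$-dimensional sub-lattice $\hat{\mathcal{L}}$ of smallest covolume, and the ground state (corresponding to degenerate or linearly dependent configurations) sits at eigenvalue $0$. Consequently, $\Delta E_{DSP} = \textnormal{vol}(\hat{\mathcal{L}})^2$. My plan is to upper bound this quantity by exhibiting any explicit candidate $K$-dimensional sub-lattice and controlling its covolume via the LLL-reduction properties of the input basis $\mathbf{B}_P = (\mathbf{b}_1,\ldots,\mathbf{b}_N)$.

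The natural candidate is the sub-lattice $\mathcal{L}_K := \mathcal{L}(\mathbf{b}_1,\ldots,\mathbf{b}_K)$ spanned by the first $K$ basis vectors. Since $\hat{\mathcal{L}}$ minimizes covolume over all $K$-dimensional sub-lattices, we immediately get $\textnormal{vol}(\hat{\mathcal{L}})^2 \leq \textnormal{vol}(\mathcal{L}_K)^2 = \prod_{i=1}^K \|\mathbf{b}_i^*\|^2$, where the last equality is the usual Gram--Schmidt expression of the Gramian of the first $K$ rows.

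It remains to bound each factor $\|\mathbf{b}_i^*\|^2$ using LLL-reducedness of $\mathbf{B}_P$. The Lovász condition, after size reduction, gives $\|\mathbf{b}_i^*\|^2 \leq (4/3 + \varepsilon)\|\mathbf{b}_{i+1}^*\|^2$, which iterates to $\|\mathbf{b}_i^*\|^2 \leq (4/3 + \varepsilon)^{j-i} \|\mathbf{b}_j^*\|^2$ for all $j \geq i$. Combining these inequalities with the identity $\prod_{j=1}^N \|\mathbf{b}_j^*\|^2 = \textnormal{vol}(\mathcal{L})^2$ yields an individual bound $\|\mathbf{b}_i^*\|^2 \leq (4/3+\varepsilon)^{N-1}\,\textnormal{vol}(\mathcal{L})^2$ (the standard LLL upper bound on basis vector norms relative to the covolume). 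Multiplying this inequality over $i = 1,\ldots,K$ then produces
\begin{equation*}
\textnormal{vol}(\hat{\mathcal{L}})^2 \;\leq\; \prod_{i=1}^K \|\mathbf{b}_i^*\|^2 \;\leq\; \left(\tfrac{4}{3}+\varepsilon\right)^{K(N-1)} \textnormal{vol}(\mathcal{L})^{2K},
\end{equation*}
which is exactly the claimed bound on $\Delta E_{DSP}$.

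The main obstacle I expect is the third step: obtaining the per-vector LLL bound with the exact exponent $N-1$ and the exact covolume factor stated in the lemma. The Lovász chain only relates consecutive Gram--Schmidt norms, so converting it into an individual upper bound requires multiplying the telescoping inequalities across all $j > i$ and absorbing the product $\prod_{j}\|\mathbf{b}_j^*\|^2 = \textnormal{vol}(\mathcal{L})^2$. Care is needed with the arithmetic of the exponents (they can be tightened to $(N-1)/2$ via a more symmetric geometric-mean argument, matching the style of Lemma~\ref{lem:gap-free-lll}); however, for the purposes of the binary-search upper bound used in Section~\ref{subsec: gs_penalization}, the looser stated exponent $K(N-1)$ is sufficient and is what my plan delivers.
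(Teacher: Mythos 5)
Your overall strategy differs from the paper's: you bound the minimal covolume by exhibiting the explicit witness $\mathcal{L}(\mathbf{b}_1,\ldots,\mathbf{b}_K)$ and using $\textnormal{vol}(\mathcal{L}(\mathbf{b}_1,\ldots,\mathbf{b}_K))^2=\prod_{i=1}^K\|\mathbf{b}_i^*\|^2$ together with minimality, whereas the paper bounds the covolume of the optimal sub-lattice itself via Rankin's constant and Hadamard's inequality, using the observation that a $K$-DSP solution can be taken LLL-reduced and then bounding the lengths of its vectors through Corollary~\ref{cor:relative-lll-basis-size}. Your route is more elementary and avoids the Rankin machinery entirely, and it would be viable --- but your third step contains a genuine gap.

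The claimed per-vector bound $\|\mathbf{b}_i^*\|^2\leq(4/3+\varepsilon)^{N-1}\,\textnormal{vol}(\mathcal{L})^2$ is not a consequence of LLL-reducedness alone, and it is not ``the standard LLL bound'' (that bound reads $\|\mathbf{b}_1\|\leq(4/3+\varepsilon)^{(N-1)/4}\textnormal{vol}(\mathcal{L})^{1/N}$ and applies only to the first vector). The telescoped Lov\'asz inequalities $\|\mathbf{b}_i^*\|^2\leq(4/3+\varepsilon)^{j-i}\|\mathbf{b}_j^*\|^2$ hold only for $j\geq i$; they give no lower bound on $\|\mathbf{b}_j^*\|$ for $j<i$, so the identity $\prod_j\|\mathbf{b}_j^*\|^2=\textnormal{vol}(\mathcal{L})^2$ cannot be turned into an upper bound on $\|\mathbf{b}_i^*\|$ when the early Gram--Schmidt norms are small. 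Concretely, in dimension $2$ the basis with $\|\mathbf{b}_1^*\|=\epsilon$, $\|\mathbf{b}_2^*\|=1$, $\mu_{2,1}=0$ is LLL-reduced, yet $\|\mathbf{b}_2^*\|^2=1$ whereas $(4/3+\varepsilon)\,\textnormal{vol}(\mathcal{L})^2=(4/3+\varepsilon)\,\epsilon^4$ can be made arbitrarily small. What rescues the argument is precisely the hypothesis your sketch drops: the preprocessed basis is \emph{gap-free} (the counterexample above has a gap at index $1$), and Lemma~\ref{lem:gap-free-lll} then gives $\|\mathbf{b}_i^*\|\leq\|\mathbf{B}^*\|\leq(4/3+\varepsilon)^{(N-1)/4}\textnormal{vol}(\mathcal{L})$ for every $i$, which plugged into your product even yields the stronger exponent $K(N-1)/2$. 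You should therefore route step~3 through gap-freeness and Lemma~\ref{lem:gap-free-lll} (as the paper does, via Corollary~\ref{cor:relative-lll-basis-size}) rather than through the raw Lov\'asz chain; as written, the step would fail on a general LLL-reduced input.
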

\begin{proof}
From Rankin's constant definition in Eq.~(\ref{eq:rankin}), 
\begin{equation}
    \textnormal{det}(\mathbf{G}(\mathbf{v}_1,...,\mathbf{v}_K))\leq \gamma_{N,K}  \textnormal{vol}(\mathcal{L})^{2K/N},
    \label{eq:bound_det}
\end{equation}
since $ \textnormal{det}(\mathbf{G}(\mathbf{v}_1,...,\mathbf{v}_K)) = \textnormal{vol}(\hat{\mathcal{L}})^{2}$. Rankin's constants are upper bounded by 
\begin{equation}
    \gamma_{N,K} \leq \left( \frac{\prod_{i=1}^K||\mathbf{v}_i||}{\textnormal{vol}(\mathcal{L})^{K/N}}\right)^2,
    \label{eq:gamma_bound}
\end{equation}
as shown in~\cite{nguyen2009hermite}. As previously mentioned, if we LLL-reduce a basis that is a solution of the $K$-DSP, it remains a valid solution to the problem. Therefore, there exists an LLL-reduced basis that is the solution of the $K$-DSP. Hence, the length of the vectors $||\mathbf{v}_i||$ that span the solution of the problem is bounded by Corollary~\ref{cor:relative-lll-basis-size}. Therefore,
\begin{equation}
    \textnormal{det}(\mathbf{G}(\mathbf{v}_1,...,\mathbf{v}_K))\leq \left (\frac{4}{3} + \varepsilon \right )^{K(N-1)} \textnormal{vol}(\mathcal{L})^{2K}.
\end{equation}
\end{proof}

\section{\label{sec:exp}Experimental results}
In this section, we present the results obtained after running a quantum emulation of the QAOA algorithm on a classical computer to discuss the performance of the $K$-DSP quantum solver for $K =2$. While these results are low-dimensional, they are illustrative of what we can obtain in higher dimensions. The results are presented for $N = 3$ and $N = 4$, as a function of $p$ and the quality of the bases. Note that, relevant lattices in cryptography have dimensions up to $400$. However, the rank of the input lattices has been limited to $4$, since the number of qubits scales as $O(KN^2)$, as shown in Theorem~\ref{theorem:spatial}. Thus, we fix the number of qubits per qudit to $2$.  

\begin{figure*}[t!]
  \begin{subfigure}[b]{2\columnwidth}
    \includegraphics[width=\columnwidth]{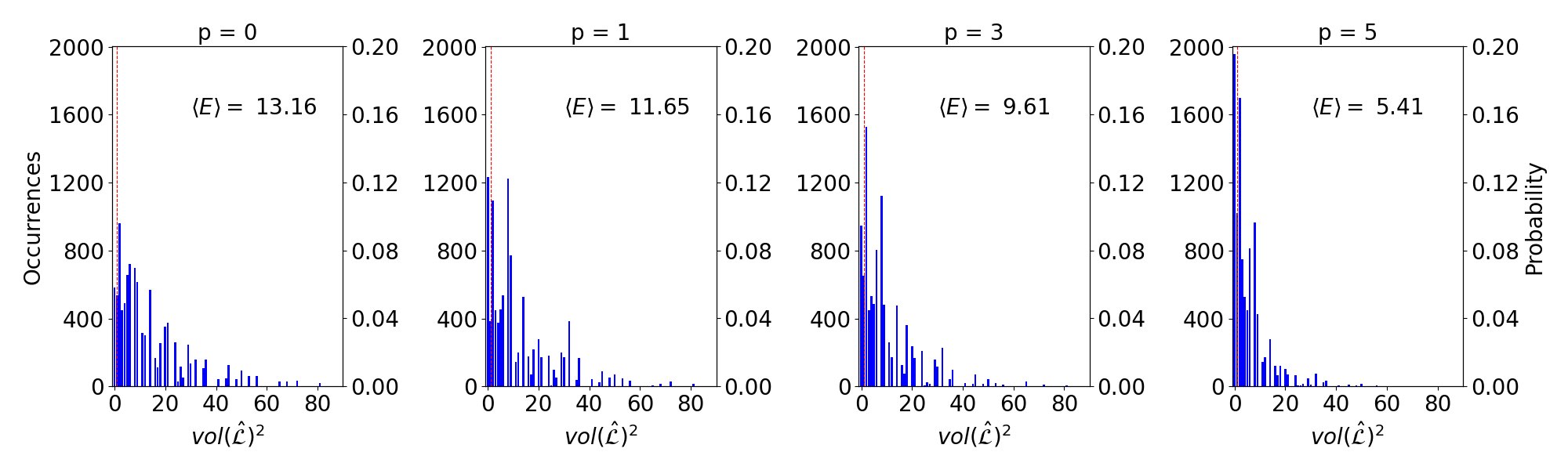}
    \caption{Good basis as input}
    \label{fig:3D_good_basis}
  \end{subfigure}
  \vspace{0.01mm}
  \begin{subfigure}[b]{2\columnwidth}
    \includegraphics[width=\columnwidth]{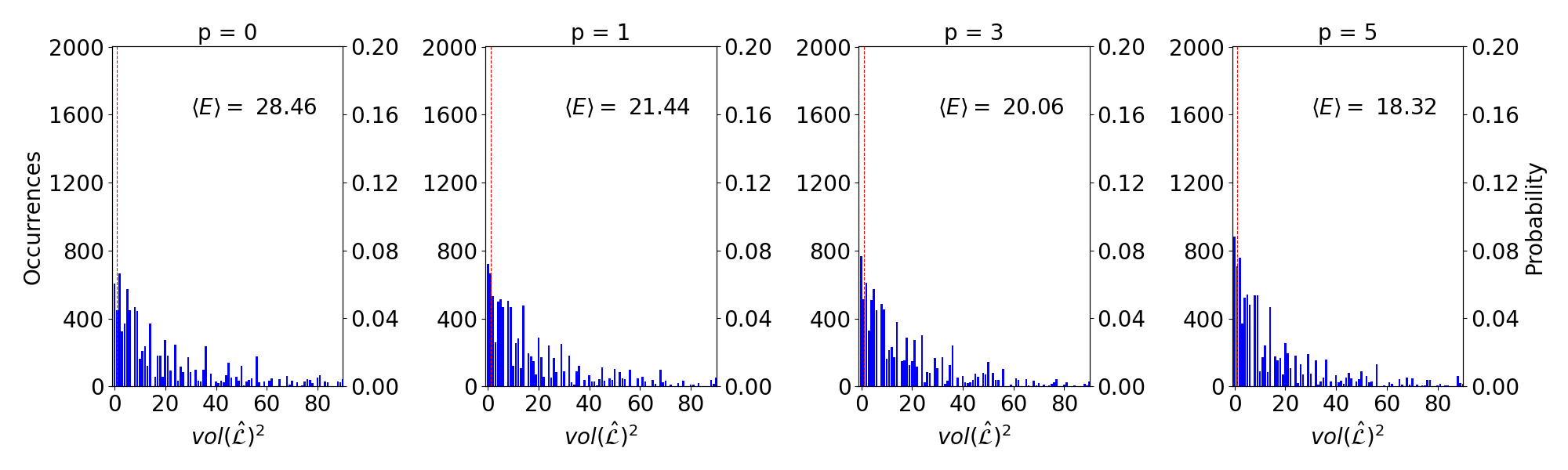}
    \caption{Worse basis as input}
    \label{fig:3D_worse_basis}
  \end{subfigure}
  \caption{The number of occurrences (left y-axis) and probability (right y-axis) represented by the blue bars for each eigenvalue of the Hamiltonian in Eq.~(\ref{eq:ham_pen}) for a 3D lattice. The number of layers increases from left to right with values set to $p = 0,1,3,5$. The figures include the average energy calculated from $10,000$ samples. The red dashed line points out the location of the solution of the $K$-DSP equivalent to the densest sub-lattice. (a) uses the 3D good basis as input, while (b) uses a worse basis achieved by multiplying the good basis by a unimodular matrix.}
  \label{fig:3D_lattice}
\end{figure*}

\begin{figure*}[t!]
  \begin{subfigure}{2\columnwidth}
    \includegraphics[width=\columnwidth]{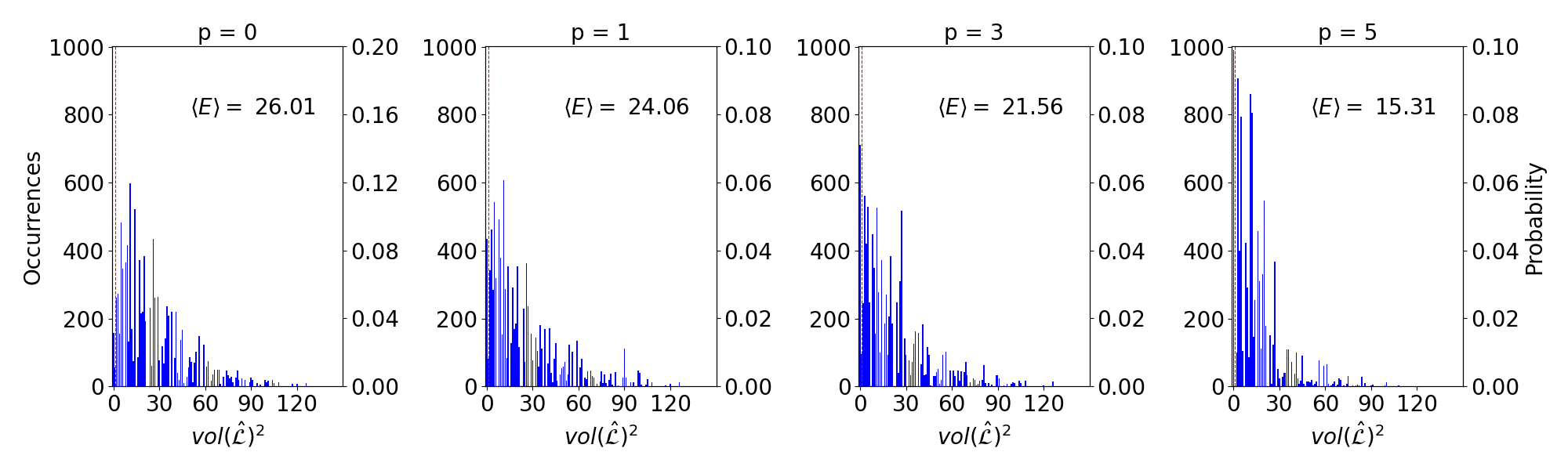}
    \caption{Good basis as input}
    \label{fig:4D_good_basis}
  \end{subfigure}
  \vspace{0.01mm}
  \begin{subfigure}{2\columnwidth}
    \includegraphics[width=\columnwidth]{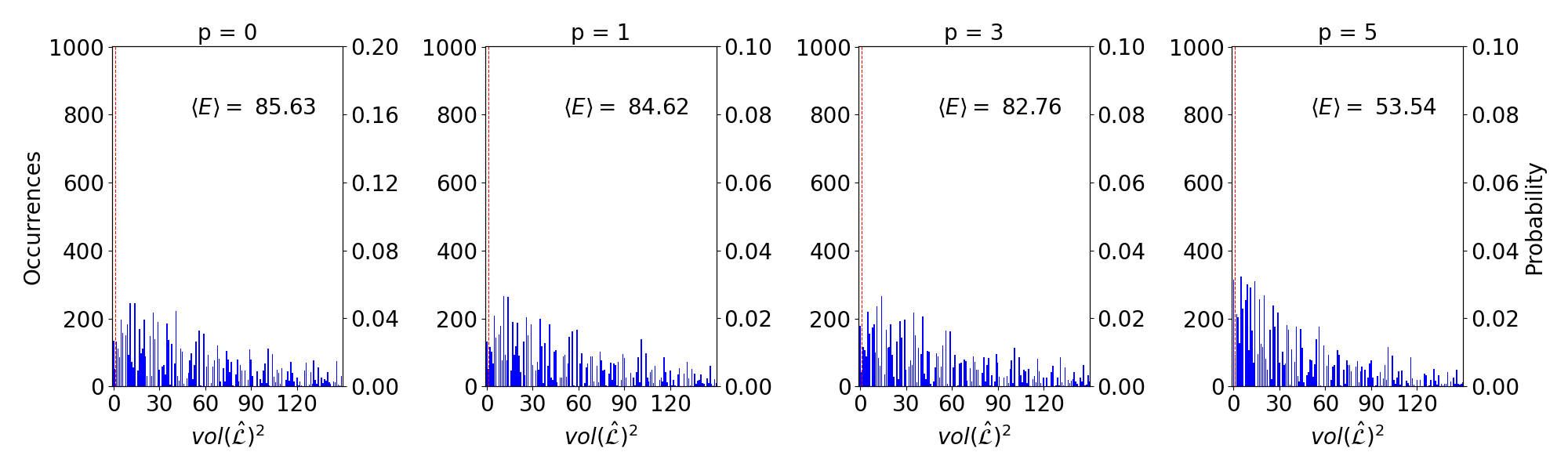}
    \caption{Worse basis as input}
    \label{fig:4D_worse_basis}
  \end{subfigure}
  \caption{The number of occurrences (left y-axis) and probability (right y-axis) represented by the blue bars for each eigenvalue of the Hamiltonian in Eq.~(\ref{eq:ham_pen}) for a 4D lattice. The number of layers increases from left to right with values set to $p = 0,1,3,5$. The figures include the average energy calculated from $10,000$ samples. The red dashed line points out the location of the solution of the $K$-DSP equivalent to the densest sub-lattice. (a) uses the 4D good basis as input, while (b) uses a worse basis achieved by multiplying the good basis by a unimodular matrix.}
  \label{fig:results4}
\end{figure*}

Theorem~\ref{theorem:spatial} and Lemma~\ref{lemma:spectralgap} have been proven for LLL-reduced input bases. Nevertheless, for experiments in low-dimensional lattices (therefore classically simulable), we create input bases in a different manner, as for low dimensions LLL returns trivially short bases which are not illustrative of the capabilities of the quantum $K$-DSP algorithm.
Instead, we first generate short bases, then we scramble them by multiplying them with random unimodular matrices. In essence, we make the short basis slightly worse for the simulations. The effect is that the optimal solutions we seek from the algorithm are not trivial ones. By scrambling the basis of the lattices, we ensure that the complexity of finding short vectors is similar to that of random lattices, based on the hardness assumption of lattice isomorphism~\cite{cryptoeprint:2021/1332}. This assumption implies that randomly rotated and scrambled bases are indistinguishable from random lattices, suggesting our experiments are reflective of the true difficulty of the problem. 

Although we evaluate the spatial scaling performance of the QAOA for the $2$-DSP, it is not possible to extrapolate the success probability heuristic to cryptographically relevant dimensions, though applying Grover search to the search space does give some complexity scaling.

In our implementation of QAOA, we use TensorFlow Quantum along with the Keras API for parameter optimization. The process begins by initializing the circuit parameters ($\boldsymbol{\gamma}, \boldsymbol{\beta}$) to values uniformly sampled from the range of values $[0, 2\pi]$. These parameters are treated as trainable variables in Keras. We employ the Adam optimizer, a widely-used gradient-based optimization method, with a learning rate of 0.001 to iteratively minimize the cost function, and a noiseless backend throughout the training process. Gradients of the cost function with respect to the parameters are computed using TensorFlow Quantum's differentiator, which efficiently calculates the gradient. Training continues until either the change in the cost function value falls below a small tolerance level (1e-6) or a maximum of 1000 epochs is reached. It is important to note that, due to the highly non-convex landscape typical of QAOA problems, the circuit parameters obtained might not be the optimal parameters. Rather, the training aims to converge to near-optimal values that satisfy the specified stopping criteria. Note that, although we used default values to initialize the circuit parameters, several alternative methods could have been employed. These include heuristic approaches~\cite{PhysRevX.10.021067}, parameter fixing techniques~\cite{9605323}, leveraging Graph Neural Networks (GNNs)~\cite{Jain2022graphneuralnetwork}, sharing and reusing parameters across different QAOA instances~\cite{9605328, Shaydulin_2019, 9651381}, and strategies inspired by Trotterized Quantum Annealing~\cite{Sack2021quantumannealing}. In the same context, other gradient-based optimization techniques could have been implemented~\cite{crooks2018performancequantumapproximateoptimization, Streif_2020, Sweke2020stochasticgradient, Sung_2020, yao2020policygradientbasedquantum, Lotshaw2021}, as well as gradient-free approaches~\cite{ACAMPORA2023110296, Cheng2024, DONG2020242}.

In Fig.~\ref{fig:3D_lattice} we present the histograms obtained after training our quantum algorithm for $N = 3$ and for different values of $p$. First, the QAOA was trained using the expectation value of the energy as the cost function. Then, we drew $10,000$ samples, which were measured in the $Z$ basis. The figures in Fig.~\ref{fig:3D_good_basis} show the results when using a $3$-dimensional short basis that defines the integer lattice. In contrast, the plots in Fig.~\ref{fig:3D_worse_basis} exhibit the outcomes when the short basis was multiplied by a unimodular matrix to obtain a lower quality basis (long basis) for the same lattice as input. 

Recall that, classically, it is more challenging to find the solutions for the $K$-DSP problem when given bad input bases. The difficulty of the QAOA in finding ground-state solutions increases with the number of qubits due to the exponentially increasing search space. Nevertheless, as the formulation for the ground state penalization has not been implemented, the outputs contain trivial solutions (i.e. the zero vector or linearly dependent vectors) that correspond to the lowest energy states.

In both the top and bottom sets of histograms in Fig.~\ref{fig:3D_lattice}, the $x$-axis depicts the covolumes squared of the sub-lattices, which are equivalent to the eigenvalues of the Hamiltonian in Eq.~(\ref{eq:ham_pen}). The blue bars represent the number of occurrences of the covolumes squared after 10,000 samples in the left $y$-axis, and the equivalent probability (occurrences divided by 10,000) in the right-hand $y$-axis. The thin red dashed line shows the location of the state that corresponds to the densest sub-lattice. The $x$-axis has been truncated so a small number of very high-energy results are not shown in order to improve the readability of the lower-energy samples.
The number of layers increases from left to right and the left-hand subplot in each row ($p = 0$) corresponds to uniform random sampling from the search space, thus representing a benchmark against which the nonzero $p$ can be compared.

Analyzing the results, we can observe that in both sets the average energy of the system decreases with $p$ as expected (from 13.16 at $p=0$ to 5.41 at $p=5$ for the canonical basis, and from 28.46 to 18.32 for the worse basis). In Fig.~\ref{fig:3D_good_basis}, the average expected value of the energy has been reduced by $60\%$, while in Fig.~\ref{fig:3D_worse_basis} only by $35\%$. This shows that the algorithm's performance is notably enhanced when using better bases. This underscores the importance of classical preprocessing through LLL-reduction in cases where the algorithm is executed on more challenging lattices. 
The trend is also visible on the blue bars since the probability mass is more concentrated on the left-hand side of the subplots for the higher $p$ values. In this way, the ground state reaches the highest probability among all possible solutions for $p=5$ in both the top and bottom subplots. At $p = 5$, the histograms exhibit a distribution similar to the Gibbs distribution~\cite{Li_2020}, where the number of occurrences decreases with the energy of the different eigenvalues.  

Notice that, while for good bases almost all the data is concentrated in the first 10 excited states, for bad bases the occurrences are more evenly distributed across the different bins, which involves a reduction of the number of occurrences with respect to the top row figures for each state. Therefore, it is easier for the algorithm to find low-energy solutions when the input consists of short and close to orthogonal vectors, as it clearly presents a better performance in this situation. Nevertheless, the ideal QAOA output would consist of obtaining the ground state with probability $1$ (the adiabatic limit), represented by a blue bar of height $10,000$ at the $0$ value on the $x$-axis. This would also be the ideal scenario for $K$-DSP when ground-state penalization is implemented.

In the same way as for $N = 3$, in Fig.~\ref{fig:results4}, we present the histograms obtained when using as input a $4$-dimensional good basis in the right-hand-side figure, and a bad basis in the left-hand-side figure. The $x$-axis has also been truncated to enhance the legibility of the data.

The behavior is quite similar to the one in Fig.~\ref{fig:3D_lattice}: the average energy of the system decreases with $p$ in both sets, being lower for the set of good bases (Fig.~\ref{fig:4D_good_basis}) than for bad bases (Fig.~\ref{fig:4D_worse_basis}) at each $p$. The occurrences at high-energy states decrease as $p$ increases, while low-energy configurations become more common. The taller blue bars on the left side of the subplots at higher $p$ indicate higher probability density in this region. As was observed in Fig.~\ref{fig:3D_lattice}, the outputs are less concentrated around low-energy values for the bad bases. 

Nevertheless, noticeable differences arise when considering different dimensions of the input basis. Firstly, since solving the $2$-DSP using a $4$-dimensional basis requires a larger search space, it results in a more complex QAOA circuit. The results for $N = 4$ have been obtained using $16$ qubits, compared to $12$ qubits for $N = 3$. In App.~\ref{app:appendixC}, we can observe that the QAOA circuit is relatively complex even for a 6-qubit system. For 4-dimensional good bases, the improvement in getting low-energy configurations when increasing $p$ is more modest than for 3-dimensional good bases, while for 4-dimensional bad bases is barely discernible.

We can zoom in on the results by analyzing the probability for low-energy configurations and comparing their behavior with increasing $p$. The probability of obtaining states in ranges $\textnormal{vol}^2(\mathcal{\hat{L}}) \leq 5, 10, 20$ is represented in Table~\ref{table:lowenergy}. We show the probability with respect to $p$ and the dimension of the input basis, as well as its quality.  

The table illustrates that increasing the number of layers from $p = 1$ to $p = 5$ in all cases enhances the probability of obtaining low-energy states. Notice that, when using $3$-dimensional good bases, we obtain the best performance of our quantum solver since the probability goes from 0.40 to 0.50 for $\textnormal{vol}^2(\mathcal{\hat{L}}) \leq 5$ and from 0.81 to 0.88 for $\textnormal{vol}^2(\mathcal{\hat{L}}) \leq 20$. This means that in the range $\textnormal{vol}^2(\mathcal{\hat{L}}) \leq 20$, the $81\%$-$88\%$ of the outcomes are found within this low-energy range versus the $10\% - 20\%$ obtained between $20 \leq \textnormal{vol}^2(\mathcal{\hat{L}}) \leq 261$. 

Still, for bad bases, the impact of increasing the number of layers on the probability of obtaining low-energy states is significantly lower. The worst case performance is found for a $4$-dimensional bad basis where only $26\% - 30\%$ are within the range $\textnormal{vol}^2(\mathcal{\hat{L}}) \leq 20$. This likely indicates that much higher depths are required to see the same improvements, which would imply that any classical LLL-type preprocessing could reduce the work and improve the performance of the quantum $K$-DSP algorithm.

For bad bases, the probability at each $p$ is approximately 0.1 lower compared to good bases. This behavior was expected because the average covolume of a sub-lattice when randomly sampled will be higher for bad input bases and also the integer values that represent the coefficients, both introducing additional costs. 

\vspace{5mm}
\begin{table}[ht]
\begin{tabular}{lcllll}
\toprule
 \multicolumn{1}{c}{{\textbf{Range}}} & \multicolumn{1}{c}{{ \textbf{Layers}}} & \multicolumn{2}{c}{$\mathbf{N = 3}$} & \multicolumn{2}{c}{$\mathbf{N = 4}$} \\
\cmidrule(lr){3-4} \cmidrule(lr){5-6}
\multicolumn{1}{c}{}& \multicolumn{1}{l}{} & good & bad & good & bad \\
\midrule
\multicolumn{1}{c}{}  & 1  & 0.399 &0.319 & 0.215 & 0.068 \\ 
\cmidrule(lr){2-2}
\multicolumn{1}{c}{$\textnormal{vol}(\hat{\mathcal{L}})^2 \leq 5$}& 3 & 0.429 & 0.325  & 0.236 & 0.072 \\ 
\cmidrule(lr){2-2}
\multicolumn{1}{c}{} & 5  & 0.499 & 0.342 & 0.264 & 0.090\\
\midrule
 & 1  &0.651 & 0.475 & 0.349  & 0.123 \\ 
 \cmidrule(lr){2-2}
\multicolumn{1}{c}{$\textnormal{vol}(\hat{\mathcal{L}})^2 \leq 10$} & 3  & 0.675  & 0.480  & 0.363 & 0.130\\ 
 \cmidrule(lr){2-2}
& 5 & 0.737 & 0.501 & 0.379  & 0.159 \\
\midrule
 & 1  & 0.812 & 0.676 & 0.595 & 0.257 \\ 
 \cmidrule(lr){2-2}
\multicolumn{1}{c}{$\textnormal{vol}(\hat{\mathcal{L}})^2 \leq 20$} & 3 & 0.841  & 0.673 & 0.606  & 0.260 \\ 
\cmidrule(lr){2-2}
& 5 & 0.881  & 0.689 & 0.668  & 0.302\\ 
\bottomrule
\end{tabular}
\caption{Probability of obtaining low-energy configurations within the ranges $\textnormal{vol}^2(\mathcal{\hat{L}}) \leq 5, 10, 20$ as a function of the number of layers $p$, the dimension $N$, and the quality of the input basis for 3D and 4D integer lattices.}
\label{table:lowenergy}
\end{table}

\subsection{Variational algorithm scaling}

Here, we discuss the complexity of our quantum algorithm in terms of the total number of 1-qubit and 2-qubit gates required for the QAOA implementation across various values of $N$. 
The results depicted in Fig.~\ref{fig:qaoa_scaling} were derived from calculating directly the number of gates for one single layer of QAOA ($p = 1$) for the $2$-DSP. This includes both 1-qubit gates (shown in blue) and 2-qubit gates (shown in red), with comparisons between good bases (solid lines) and bad bases (dashed lines).

Fig.~\ref{fig:qaoa_scaling} shows that the number of gates increases polynomially with $N$, but the growth rate is significantly higher when using a bad basis. In the good basis case, the number of 2-qubit gates grows steadily, and by $N=15$, the 2-qubit gates double the number of 1-qubit gates, indicating manageable complexity despite the growing lattice size.
In contrast, the bad basis case exhibits a much steeper increase in the number of gates, particularly the 2-qubit gates. For $N=15$, the number of 2-qubit gates is nearly six times greater than 1-qubit gates. This highlights the inefficiency introduced when using bad bases, as the high number of entangling gates significantly increases the complexity of the quantum circuit.
These observations highlight the importance of selecting a good basis to mitigate system complexity. While the number of 1-qubit gates remains comparable between cases, the significantly higher number of 2-qubit gates in the bad basis case makes the simulation far more resource-intensive. Note that, increasing the number of QAOA layers $p$ would simply scale the gate counts proportionally, as the circuit parameters change between layers but the structure remains the same.

Overall, these results reflect the widely known fact that training VQAs becomes expensive fast as the system complexity grows, which is indeed the result of using the bad basis of higher dimensions as input of the algorithm.

\begin{figure}
\includegraphics[width=1\columnwidth]{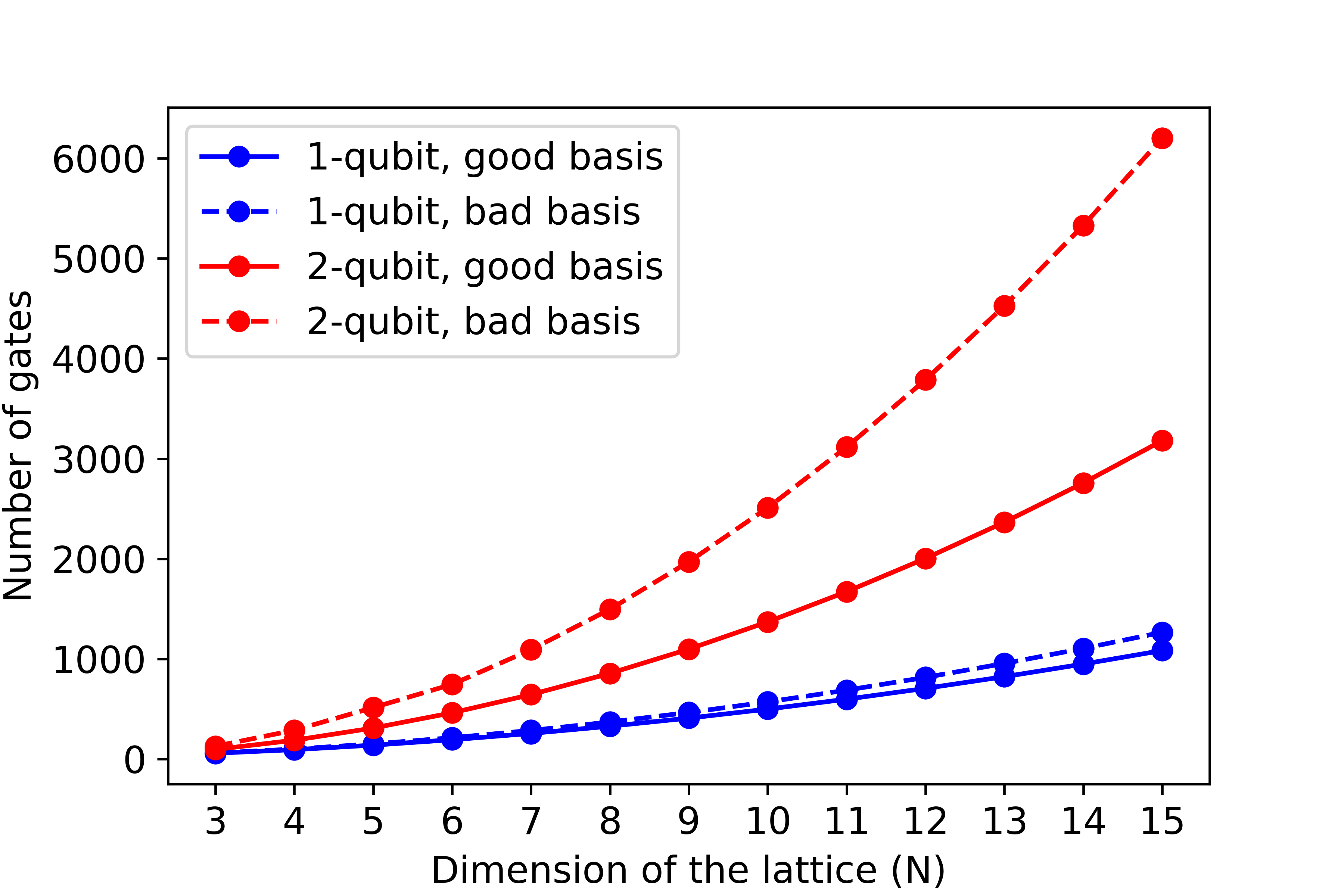}
\caption{Scaling behavior of one QAOA layer for the $2$-DSP. The total number of 1-qubit gates in the QAOA implementation is shown in blue, while the number of 2-qubit gates is depicted in red. Solid lines represent results for a good basis, and dashed lines indicate the findings for a bad basis, both plotted as a function of the lattice dimension $N$.}
\label{fig:qaoa_scaling}
\end{figure}

\section{\label{sec:conclusions}Conclusions}
While SVP is a widely studied problem in lattice literature and is closely linked to recently standardized cryptosystems such as Kyber, Dilithium, and Falcon~\cite{8406610, fouque2018falcon}, its generalization, $K$-DSP has remained much less tested. Due to the more general structure of $K$-DSP, further examination may enable deeper insights into the hardness of lattice problems. 

The work presented here serves as a platform to enable further exploration of this problem and gives some indication of the complexity underlying the structure of mathematical lattices. The formulation of the $K$-DSP Hamiltonian, which becomes extremely complex even for small systems as in App.~\ref{app:appendixC}, serves as a tool with which we can probe the problem from a quantum angle.

The bounding of the search space for the algorithm defined naturally leads to a Grover-based quantum time complexity, even if no such result exists for QAOA or AQC-based applications of the Hamiltonian defined here. We showed how to incorporate such a Groverized $K$-DSP search under certain conditions into a global hybrid algorithm, however, we believe that with stronger conditions on input bases, and reduced spatial requirements, future works could investigate faster hybrid quantum/classical algorithms for tackling $K$-DSP.

The experimental results presented demonstrate the ability of our quantum routine to sample dense lattices in low dimensions, but in order for future versions to be of practical use, many more tricks from state-of-the-art QAOA literature would need to be applied, such as improved cost functions like CVaR ~\cite{Barkoutsos_2020} or Gibbs objective functions~\cite{Li_2020}, and parameter initialization ~\cite{PhysRevX.10.021067, 9605323, Jain2022graphneuralnetwork, 9605328, Shaydulin_2019, 9651381, Sack2021quantumannealing}. Furthermore, for high $K$, the many-body $Z$ interactions present in the Hamiltonians will require expensive decompositions in order to implement on any future architectures ~\cite{boothby2020nextgeneration}, or architectures will have to adapt to be more accommodating to many-body interactions ~\cite{Jordan_2008,Chancellor_2017, leib2016transmon}. This `Hamiltonian engineering', whether directly of the $K$-DSP Hamiltonian defined here or via the construction of new Hamiltonians is thus another area for future work both from the quantum lattice algorithm, and the hardware design angles.

\begin{acknowledgments}
We thank Martin R. Albrecht for introducing us to this problem, and Miloš Prokop for the suggestions provided during the Hamiltonian formulation. We also thank Martin Ganahl and Rishi Sreedhar for discussions regarding the QAOA simulation, and Stefan Leichenauer for ideas on ground state penalization.
\end{acknowledgments}

\appendix
\section{\label{app:appendixA}Extended Hamiltonian formulation}

For the purpose of the QAOA simulation, the dimension of the sub-lattice that we wanted to find $K$ was fixed to 2. However, it can also be calculated the expression for $K = 3$. In this context, the Gramian (equivalent to the squared covolume of the sub-lattice) is spanned by the basis $\hat{\mathbf{B}} = (\mathbf{v}_1, \mathbf{v}_2, \mathbf{v}_3)$ can be written as 
\begin{equation}
\textnormal{det}(\mathbf{G}(\mathbf{v}_1,\mathbf{v}_2,\mathbf{v}_3)) = 
   \begin{vmatrix}
\langle\mathbf{v}_{1}, \mathbf{v}_{1} \rangle & \langle\mathbf{v}_{1}, \mathbf{v}_{2} \rangle &\langle\mathbf{v}_{1}, \mathbf{v}_{3} \rangle  \\
\langle\mathbf{v}_{2}, \mathbf{v}_{1} \rangle & \langle\mathbf{v}_{2}, \mathbf{v}_{2} \rangle &\langle\mathbf{v}_{2}, \mathbf{v}_{3}\rangle \\
\langle\mathbf{v}_{3}, \mathbf{v}_{1} \rangle & \langle\mathbf{v}_{3}, \mathbf{v}_{2} \rangle &\langle\mathbf{v}_{3}, \mathbf{v}_{3}\rangle
\end{vmatrix}. 
\label{eq:gram3}
\end{equation}

Considering that the three output vectors are linear combinations of the input basis, we can express them as 
\begin{equation}
    \begin{cases}
      \mathbf{v}_{1} = \mathbf{x} \mathbf{B} = x_1 \mathbf{b}_1 + ... +x_N\mathbf{b}_N,\\
      \mathbf{v}_{2} = \mathbf{y} \mathbf{B} = y_1 \mathbf{b}_1 + ... +y_N\mathbf{b}_N,\\
     \mathbf{v}_{3} = \mathbf{z} \mathbf{B} = z_1\mathbf{b}_1 + ... + z_N\mathbf{b}_N.
    \end{cases}   
    \label{eq:vectors3}
\end{equation}
Calculating the determinant of Eq.~(\ref{eq:gram3}) in terms of the vectors in Eq.~(\ref{eq:vectors3}), the cost function results in 
\begin{align}
    \textnormal{vol}(\mathcal{\hat{L}})^2 &= \sum_{i,j,k,l,m,n = 1}^N x_ix_jy_ky_lz_mz_n (\mathbf{G}_{ij}\mathbf{G}_{kl}\mathbf{G}_{mn}\nonumber \\
    & + \mathbf{G}_{ik}\mathbf{G}_{lm}\mathbf{G}_{nj}+\mathbf{G}_{im}\mathbf{G}_{kj}\mathbf{G}_{nl}- \mathbf{G}_{im}\mathbf{G}_{kl}\mathbf{G}_{nj}\nonumber \\
    & -\mathbf{G}_{ik}\mathbf{G}_{lj}\mathbf{G}_{mn}-\mathbf{G}_{ij}\mathbf{G}_{km}\mathbf{G}_{nl}).
\end{align}

To construct the classical Hamiltonian, we need to substitute each coefficient by the qudits operators $\hat{Q}^{(j)}$. 
Applying $H_{DSP}$ over a configuration of qubits, each operator will return the value of the corresponding coefficient and in this way, they will output the covolume squared of the sub-lattice which is an eigenvalue of the problem Hamiltonian and the sub-lattice, defined by the grid of qubits, as the eigenvector. 

\section{\label{app:appendixB}Bounds on the number of qubits for HKZ-reduced basis}
Considering that we have an extremely reduced input basis, the number of qubits that suffice to find an exact solution to the $K$-DSP can be reduced.

\begin{theorem}\textnormal{(Bound on the number of qubits).}\label{theorem:spatial2}
Let $N$-dimensional lattice $\mathcal{L}$ be the input to the quantum algorithm for the $K$-DSP as the span of a gap-free HKZ-reduced basis $\mathbf{B}= (\mathbf{b}_1,..., \mathbf{b}_N)$. Then, a $5KN\log{N}$ qubit Hilbert search space is sufficient to ensure that at least one exact solution of the $K$-DSP is contained. 
\end{theorem}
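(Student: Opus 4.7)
The plan is to replicate the structure of the proof of Theorem~\ref{theorem:spatial}, replacing each LLL-derived factor of $(4/3+\varepsilon)^{(N-1)/4}$ by a polynomial-in-$N$ factor that arises from the stronger HKZ reduction. Since the final qubit count is essentially the logarithm of the $L^\infty$ norm of the unimodular transformation $\mathbf{U}$ between the preprocessed input basis and the output solution basis, turning the exponentials into polynomials turns the $O(KN^2)$ bound into an $O(KN\log N)$ bound.

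First, I would establish the HKZ analogue of Lemma~\ref{lem:gap-free-lll}: for a gap-free HKZ-reduced basis $\mathbf{B}_P$ of $\mathcal{L}$, one has $\|\mathbf{B}_P^*\| \leq N^{c_1}\,\textnormal{vol}(\mathcal{L})^{1/N}$ for a small constant $c_1$. The key structural fact is that in any HKZ-reduced basis, $\|\mathbf{b}_i^*\|=\lambda_1(\pi_{i-1}(\mathcal{L}))$, which Minkowski bounds by $\sqrt{N-i+1}\,\textnormal{vol}(\pi_{i-1}(\mathcal{L}))^{1/(N-i+1)}$. The gap-free hypothesis prevents any single $\|\mathbf{b}_i^*\|$ from dominating the product and forces every Gram-Schmidt norm to lie within a polynomial-in-$N$ factor of the geometric mean, mirroring the colouring argument of Lemma~\ref{lem:gap-free-lll}.

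Next, I would upgrade the relative-basis-size estimates (Lemma~\ref{lem:relative-basis-size} and Corollary~\ref{cor:relative-lll-basis-size}) to the HKZ setting: for any two HKZ-reduced bases $\mathbf{B},\mathbf{C}$ of the same lattice, $\|\mathbf{C}^*\| \leq N^{c_2}\|\mathbf{B}^*\|$. The argument is analogous—project one basis onto the flag defined by the other and apply the HKZ quality bound on the resulting projected lattice. Combining these in the style of Lemma~\ref{lemma:unimodular_bound}, together with the spectral-to-Gram-Schmidt conversion that costs a factor of $N$, yields $\|\mathbf{U}\|_\infty \leq N^{c_3}$ for an overall constant $c_3$. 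Mirroring the final calculation of Theorem~\ref{theorem:spatial}, each of the $KN$ entries of $\mathbf{U}$ then lies in $[-N^{c_3},N^{c_3}]$, so it needs at most $c_3\log N$ qubits, giving $c_3\,KN\log N$ qubits in total.

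The main obstacle is tracking these polynomial exponents tightly so that the cumulative constant $c_3$ does not exceed $5$. In the LLL case a single exponential dominates and absorbs smaller contributions, so the constants in front of the exponent did not need sharp control; here, by contrast, each of $c_1$, $c_2$, and the spectral-vs-Gram-Schmidt conversion contributes an additive term to $c_3$, and every contribution must be bookkept carefully. The most delicate step is the gap-free propagation, since HKZ alone does not bound the ratio of consecutive $\|\mathbf{b}_i^*\|$; one must chain the Minkowski bound at each index $i$ in a way that uses the gap-free assumption to prevent a compounding blow-up through the telescoping product of covolumes of projected lattices.
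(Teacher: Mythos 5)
Your overall skeleton matches the paper's: bound the unimodular transform $\mathbf{X}$ between the preprocessed input basis and the solution basis by a polynomial in $N$, allocate roughly $\log$ of that bound qubits per coefficient, and multiply by the $KN$ coefficients to land at $O(KN\log N)$. Where you diverge is in how the polynomial bound is obtained. The paper does not re-derive HKZ analogues of Lemma~\ref{lem:gap-free-lll}, Lemma~\ref{lem:relative-basis-size} and Corollary~\ref{cor:relative-lll-basis-size}. Instead it reuses the inequality $\|\mathbf{X}\|_\infty \leq \|\mathbf{B}_{out}\|_{sp}\,\|\mathbf{B}_{in}^{-1}\|_{sp}$ from Theorem~\ref{theorem:spatial} and bounds the actual vector lengths of an HKZ-reduced basis via the classical estimate $\|\mathbf{b}_i\|^2 \leq \frac{i+3}{4}\lambda_i(\mathcal{L})^2$, then uses gap-freeness (guaranteed by the SVP-oracle preprocessing) to write $\lambda_i(\mathcal{L}) \leq \lambda_N(\mathcal{L}) \leq \sqrt{N}\,\lambda_1(\mathcal{L})$, and Minkowski's bound to control $\lambda_1(\mathcal{L})$. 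This yields $\|\mathbf{b}_i\|^2 \leq \frac{N+3}{4}N^2\,\textnormal{vol}(\mathcal{L})^{1/N}$ and hence $\|\mathbf{X}\|_\infty \leq N\left(\frac{N+3}{4}\right)^2 N^2 = O(N^5)$, which is exactly where the constant $5$ in $5KN\log N$ comes from.

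The genuine gap in your plan is the step you yourself flag as delicate: showing that a gap-free HKZ basis has all Gram--Schmidt norms within a $\textnormal{poly}(N)$ factor of $\textnormal{vol}(\mathcal{L})^{1/N}$ by ``chaining Minkowski through the projected lattices.'' The colouring argument of Lemma~\ref{lem:gap-free-lll} hinges on the Lov\'asz condition, which gives two-sided control on the ratio of \emph{consecutive} Gram--Schmidt norms; HKZ reduction provides no such local lower bound, and Definition~\ref{definition_gap} only yields the weak inequality $\max_{i\leq r}\|\mathbf{b}_i^*\| \geq \min_{j>r}\|\mathbf{b}_j^*\|$ for each $r$, which does not obviously prevent a single large $\|\mathbf{b}_j^*\|$ from hiding among smaller ones. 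You would also need to track that the cumulative exponent $c_3$ actually lands at $5$. None of this is fatal --- the statement is true and your endgame is the right one --- but as written the proposal defers the only hard part. Replacing the middle of your argument with the paper's ingredient (the HKZ length bound against successive minima, plus $\lambda_N \leq \sqrt{N}\lambda_1$ from gap-freeness and Minkowski) closes it with explicit constants.
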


\begin{proof} In Theorem~\ref{theorem:spatial} is shown that the unimodular transformation $\mathbf{X}$ is bounded as 
\begin{equation}
    ||\mathbf{X}||_{\infty} \leq ||\mathbf{{B}}_{out}||_{sp}||\mathbf{B}_{in}^{-1}||_{sp},
    \label{eq:X_bound}
\end{equation}
\noindent
where $\mathbf{{B}}_{in}$ and $\mathbf{{B}}_{out}$ correspond to the input basis and the output basis, respectively.

Let the input be a gap-free HKZ-reduced basis~\cite{korkine1877formes}. Hence, the length of the vectors is bounded by 
\begin{eqnarray}
    |\mathbf{b}_i|^2 \leq \frac{i + 3}{4}\lambda_i(\mathcal{L})^2
    \label{eq:bi_bound_hkz}
\end{eqnarray}
When allowing for an SVP oracle for the preprocessing step, we ensure that the input and output basis are gap-free. Therefore, the successive minima $\lambda_i(\mathcal{L})$ of a lattice $\mathcal{L}$ are related as $\lambda_1(\mathcal{L}) \leq \lambda_2(\mathcal{L}) \leq \ldots \leq \lambda_N(\mathcal{L})\leq\sqrt{N}\lambda_1(\mathcal{L}) $. Taking the highest value for $i$ in Eq.~(\ref{eq:bi_bound_hkz}) to get an upper bound we obtain
\begin{eqnarray}
    |\mathbf{b}_i|^2 \leq \frac{N + 3}{4} N\lambda_1(\mathcal{L})^2.
    \label{eq:bi_bound_hkz2}
\end{eqnarray}
Using Minkowski's Theorem $\lambda_1(\mathcal{L}) \leq \sqrt{N}\textnormal{vol}(\mathcal{L})^{1/2N}$ the bound on the length of the basis vectors results in 
\begin{eqnarray}
    |\mathbf{b}_i|^2 \leq \frac{N + 3}{4} N^2\textnormal{vol}(\mathcal{L})^{1/N}.
    \label{eq:bi_bound_hkz3}
\end{eqnarray}
The result obtained in Eq.~(\ref{eq:bi_bound_hkz3}) can be substituted into Eq.~(\ref{eq:X_bound}), such that 
\begin{equation}
    ||\mathbf{X}||_{\infty} \leq  N\left(\frac{N + 3}{4}\right)^2 N^2 = \frac{N^5}{16}+ \frac{3N^4}{8} + \frac{9N^3}{16},
    \label{eq:X_bound2}
\end{equation}

The coefficients that describe the vectors that span the densest sub-lattice must be bounded by $-2^{m} \leq x_i \leq 2^{m} -1$ for all $i =1,...,N$, where $m$ are the qubits required to represent the coefficients $x_i$. 
Consequently, the total number of qubits required to run the $K$-DSP quantum solver can be expressed as 
\begin{equation}
    n = K  \sum_{i = 1}^N \lfloor \log{2^{m}} \rfloor \leq K  \log{\prod_{i=1}^N 2^{m}}.
    \label{eq:total_n2}
\end{equation}

Using the relation in Eq.~(\ref{eq:X_bound2}), we can write the inequality
\begin{align}
    \prod_{i=1}^N 2^{m} \leq& \prod_{i=1}^N \left (\frac{N^5}{16}+ \frac{3N^4}{8} + \frac{9N^3}{16}\right )  \nonumber\\
    & =
    \left (\frac{N^5}{16}+ \frac{3N^4}{8} + \frac{9N^3}{16}\right )^N  
\label{eq:ni2}  
\end{align}

Thus, the upper bounds for the number of qubits result in
\begin{equation}
    n \leq K \log{\left (\frac{N^5}{16}+ \frac{3N^4}{8} + \frac{9N^3}{16}\right )^N},
\end{equation}
which implies that the quantum algorithm for the $K$-DSP requires $O\left(5KN \log N\right)$ qubits. 
\end{proof}

\section{\label{app:appendixC}QAOA circuit diagram}

In Fig.~\ref{fig:QAOAcircuit} we show an illustration of a simple QAOA circuit generated using a \emph{Cirq} feature called \emph{SVGCircuit}, used to display quantum circuits. The circuit exemplifies the complexity of the problem, since a substantial amount of gates are already required to implement the QAOA for $K = 2$, $N = 3$, and only one qubit per qudit and one layer ($p = 1$). The problem Hamiltonian employed for the QAOA simulation is similar but more complex than the one presented in Eq.~(\ref{eq: cost}).

The first logic gate applied to all the qubits in Fig.~\ref{fig:QAOAcircuit} is the Hadamard gate, which generates a superposition of all possible states that can be created with the available qubits. The rest of the quantum logic gates represent the application of the problem operator and the mixer operator over the superposition state. The last line of the circuit represents the measurements that are performed at the end, which will be fed to the classical optimization process in which the QAOA parameters are updated.

\begin{figure*}[h]
   \begin{minipage}{0.92\textwidth}
    \includegraphics[width=\linewidth]{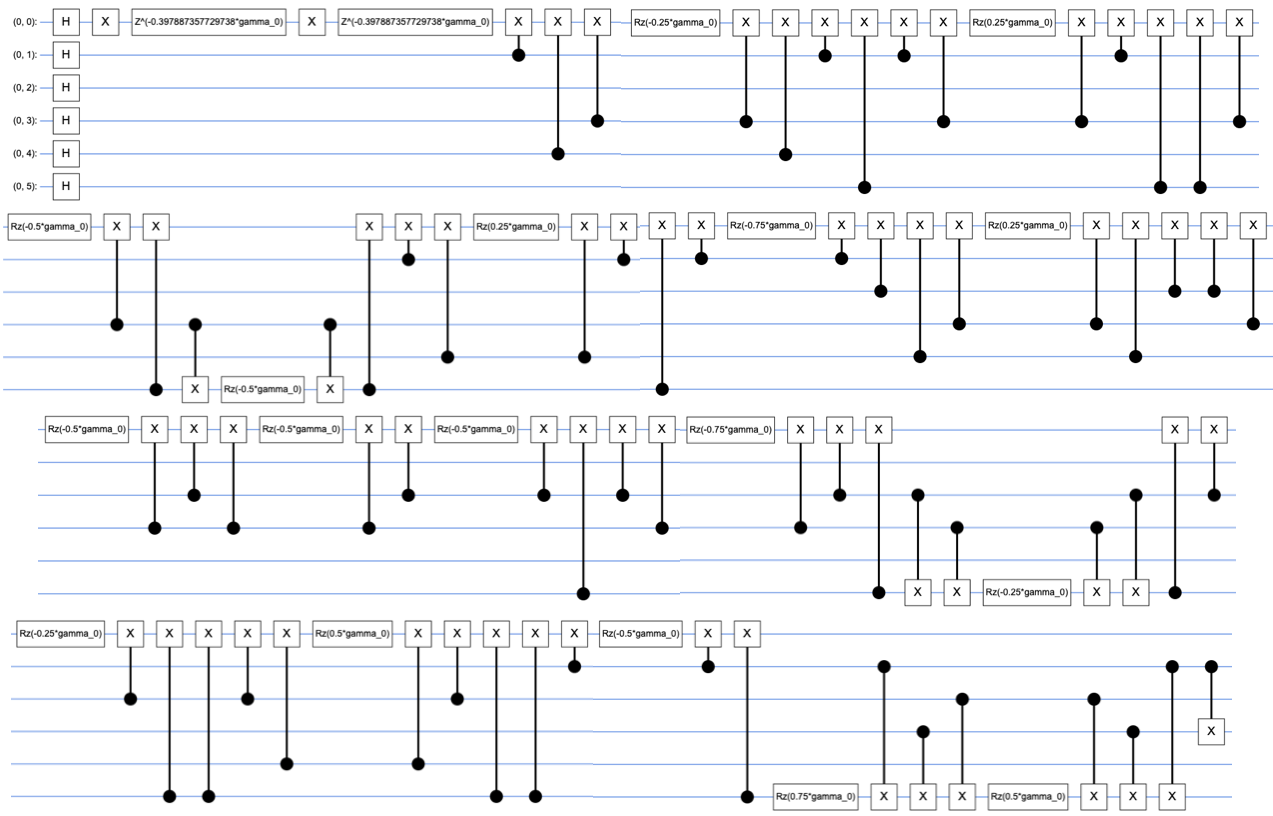}
    \label{fig:QAOA_diagram1}
   \end{minipage}
    \quad
   \begin{minipage}{0.98\textwidth}
    \includegraphics[width=\linewidth]{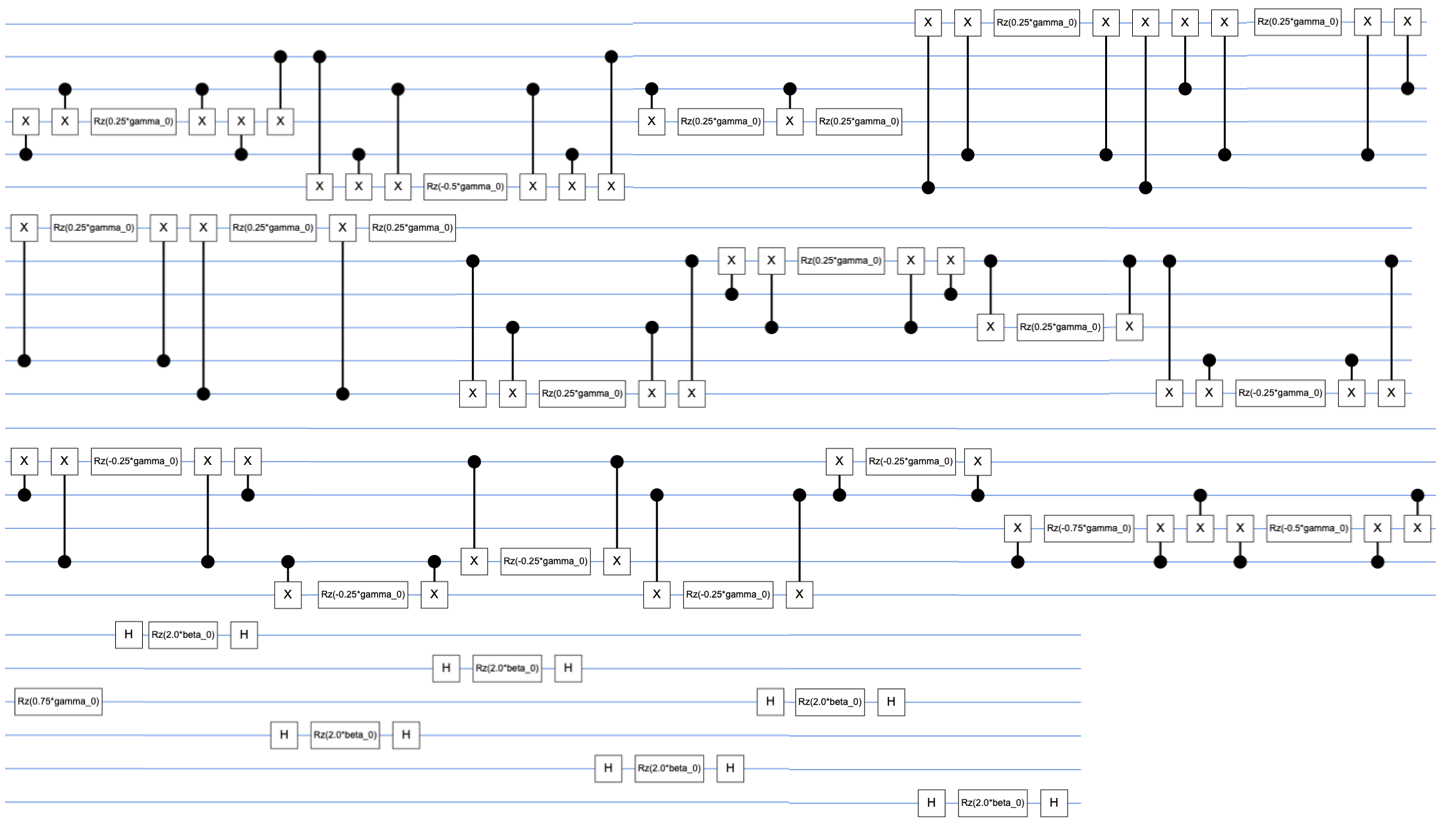}
    \label{fig:QAOA_diagram2}
   \end{minipage}
   \caption{\label{fig:QAOAcircuit}A minimal example of the QAOA circuit for the $K$-DSP for $K = 2$ and $N = 3$ with 1 qubit per qudit and fixing the circuit depth to $p =1$. This is a 6-qubit system ($K \times N \times 1$) and the circuit is hence presented in 6-line stanzas.} 
\end{figure*}
\clearpage
\nocite{*}
\bibliography{biblio}

\end{document}